\newtheorem{theorem}{Theorem}%
\newtheorem{lemma}{Lemma}
\begin{document}

\title[Article Title]{Enumeration of Tree-like Multigraphs with a Given Number of Vertices,
 Self-loops and Multiple Edges}
 \author*[1]{\fnm{Naveed Ahmed Azam}}\email{azam@amp.i.kyoto-u.ac.jp}

\author[1]{\fnm{Seemab Hayat}\email{shayat@math.qau.edu.pk}}

\affil[1]{\orgdiv{Discrete Mathematics and Computational Intelligence Laboratory,  Department of Mathematics}, \orgname{Quaid-i-Azam University}, \city{Islamabad}, \country{Pakistan}}

\abstract{ 
Counting non-isomorphic tree-like multigraphs that include self-loops and multiple edges is an important problem in combinatorial enumeration, with applications in chemical graph theory, polymer science, and network modeling. Traditional counting techniques, such as Pólya’s theorem and branching algorithms, often face limitations due to symmetry handling and computational complexity. This study presents a unified dynamic programming framework for enumerating tree-like graphs characterized by a fixed number of vertices, self-loops, and multiple edges. The proposed method utilizes canonical rooted representations and recursive decomposition of subgraphs to eliminate redundant configurations, ensuring exact counting without the need for explicit structure generation. The framework also provides analytical bounds and recurrence relations that describe the growth behaviour of such multigraphs. This work extends previous models that treated self-loops and multiple edges separately, offering a general theoretical foundation for the enumeration of complex tree-like multigraphs in both mathematical and chemical domains.}

\maketitle
\section{Introduction}
Graph enumeration with prescribed structural constraints is a fundamental problem in combinatorial mathematics with significant applications in computational chemistry, bioinformatics, and drug discovery~\cite{Blum2009, Lim2020, Meringer2013}. The problem of counting and generating chemical compounds can be naturally formulated as enumerating graphs subject to specific topological requirements~\cite{Benecke1995, Peironcely2012}. Chemical graphs often exhibit tree-like structures, either as standalone acyclic molecules or as tree-like components attached to cyclic cores, where these acyclic portions critically influence molecular properties such as solubility, reactivity, and biological activity~\cite{Batool2024, Cao2024}.  From a chemical perspective, tree-like structures with self-loops and multiple edges arise naturally in polymer topology analysis~\cite{Tezuka2002, Galina1988}. For a chemical graph, the polymer topology is obtained by repeatedly removing vertices of degree one and two, resulting in a connected graph in which each vertex has degree at least three, possibly containing self-loops and multiple edges~\cite{Tezuka2002}. The cycle rank of such a graph is defined as the number of edges that must be removed to obtain a simple spanning tree. Understanding and enumerating these structures is essential for classifying polymer topologies and elucidating structural relationships between different macromolecules and their synthetic pathways~\cite{Galina1988}.

Trees with self-loops and multiple edges represent an important class of polymer topologies. Specifically, a tree with a given number of vertices, self-loops, and multiple edges has a cycle rank and includes all polymer topologies with this tree-like structure. While previous research has addressed tree-like graphs with either self-loops~\cite{Azam2020} or multiple edges~\cite{Azam2024} separately, the simultaneous enumeration of structures containing both features remains an open and practically important problem. This combined case presents unique theoretical challenges due to the increased complexity of canonical representation and isomorphism detection when both types of cyclic structures coexist.

Several enumeration paradigms have been developed to address graph counting problems. Pólya's enumeration theorem~\cite{Polya1937, Polya2012} provides a classical combinatorial approach based on cycle indices of symmetry groups. The key idea is to construct a generating function using the cycle index of the symmetry group of the underlying object, which guarantees that isomorphic graphs are counted only once. However, computing these cycle indices becomes computationally demanding for complex graph structures, particularly when dealing with graphs containing both self-loops and multiple edges.  Branch-and-bound algorithms have been extensively studied for tree-like chemical graph enumeration. MacKay and Piperno ~\cite{McKay2014} developed Nauty and Traces, employing branch-and-bound strategies to efficiently test graph isomorphism and enumerate graphs. Akutsu and Fukagawa~\cite{Akutsu2007} developed a branch-and-bound method for enumerating tree-like chemical graphs from given frequency vectors. Fujiwara et al.~\cite{Fujiwara2008} improved this approach using the two-tree enumeration method by Nakano and Uno~\cite{Nakano2003, Nakano2005}. Ishida et al.~\cite{Ishida2008} presented branch-and-bound algorithms based on path frequencies, while Shimizu et al.~\cite{Shimizu2011} and Suzuki et al.~\cite{Suzuki2012} proposed feature-vector-based enumeration techniques for acyclic chemical graphs. Despite these advances, branch-and-bound methods share common drawbacks: they often compute the total number of solutions only after full enumeration, generate unnecessary intermediate structures, and suffer from high computational costs, making them inefficient for large or complex chemical graphs.

Dynamic programming (DP) has emerged as a more efficient paradigm for graph enumeration, addressing many inefficiencies of branch-and-bound approaches. Akutsu and Fukagawa~\cite{Akutsu2005} introduced a DP approach for inferring graphs from path frequency, demonstrating polynomial-time solutions for bounded-degree trees. Rue et al.~\cite{Rue2014} extended DP to surface-embedded graphs using surface cut decomposition. Imada et al.~\cite{Imada2011, Imada2010} developed DP approaches for enumerating stereoisomers of tree-structured molecular graphs and outerplanar graphs. Masui et al.~\cite{Masui2009} designed efficient DP algorithms for ranking, unranking, and enumerating unlabeled rooted and unrooted trees. He et al.~\cite{He2016} applied DP combined with backtracking to enumerate naphthalene-containing chemical structures. DP-based enumeration using graph-theoretic feature vectors has been explored~\cite{Azam2021}, and later extended to arbitrary chemical graphs~\cite{Azam2020b, Akutsu2020}. Azam et al.~\cite{Azam2020} proposed a DP algorithm for counting all tree-like graphs with vertices and self-loops (but no multiple edges). The method exploits Jordan's result~\cite{Jordan1869} that every tree has either a unique unicentroid or bicentroid, allowing each tree to be uniquely viewed as a rooted tree. Subsequently, they extended this work to generate such structures~\cite{Azam2020c}. More recently, Ilyas et al.~\cite{Azam2024} work on counting tree-like multigraphs with a given number of vertices and multiple edges (but no self-loops).

Despite these advances, no existing algorithm addresses the simultaneous enumeration of tree-like graphs containing both self-loops and multiple edges. This gap is particularly significant given that polymer topologies in general contain both structural features. For a connected graph, which may contain self-loops and multiple edges, the cycle rank is defined as the number of edges that must be removed from the graph to obtain a simple spanning tree. Haruna et al.~\cite{Haruna2017} proposed a method to enumerate all polymer topologies with cycle rank up to five, but did not provide tight bounds for higher cycle ranks. Establishing such bounds requires counting tree-like structures with arbitrary combinations of self-loops and multiple edges.

This work addresses the problem of counting all mutually non-isomorphic tree-like multigraphs with a prescribed number n of vertices, $s$ self-loops, and $m$ multiple edges, thereby filling the gap left by previous research that treated these features separately.

The structure of this paper is as follows. Section~\ref{preliminaries} introduces the preliminary concepts. Section~\ref{counting} presents a recursive relation for enumerating tree-like multigraphs with a given number of vertices and multiple edges, which serves as the foundation for developing a DP-based counting algorithm. Section~\ref{conclusion} provides the conclusion and discusses possible future research directions.
\section{Preliminaries}\label{preliminaries}
The term \textit{graph} refers to an undirected graph that may contain both self-loops and multiple edges unless stated otherwise. Let $G$ be a graph. We denote by $V(G)$ and $E(G)$ the vertex set and edge set of $G$, respectively. An edge connecting two vertices $u$ and $v$ in $G$ is denoted by $uv$ (or equivalently $vu$). An edge of the form $uu$ is called a self-loop on vertex $u$. For a vertex $v \in V(G)$, we denote by $s(v)$ the number of self-loops on $v$, and we define $s(G) = \sum_{v \in V(G)} s(v)$ to be the total number of self-loops in $G$. The multiplicity of an edge $uv$, denoted by $e(uv)$, is the number of additional edges beyond the first connecting vertices $u$ and $v$.  We define $e(G) = \sum_{uv \in E(G)} e(uv)$ to be the total number of multiple edges in $G$. A graph that contains neither self-loops nor multiple edges is called a simple graph, while a graph that may contain self-loops or multiple edges (or both) is called a multigraph. In this work, we focus on multigraphs that allow both self-loops and multiple edges. 

For a vertex $v \in V(G)$, the set of vertices adjacent to $v$ (excluding $v$ itself) is denoted by $N_G(v)$ and is called the neighborhood of $v$. The degree of a vertex $v$ in $G$, denoted by $\deg_G(v)$, is defined as $|N_G(v)| + 2s(v)$, where each self-loop contributes two to the degree. When the graph $G$ is clear from context, we may write $N(v)$ and $\deg(v)$ instead of $N_G(v)$ and $\deg_G(v)$. A graph $H$ is a subgraph of $G$ if $V(H) \subseteq V(G)$ and $E(H) \subseteq E(G)$. For a subset $X \subseteq V(G)$, the induced subgraph $G[X]$ is the subgraph of $G$ with vertex set $X$ and edge set consisting of all edges in $E(G)$ that have both endpoints in $X$, along with all self-loops on vertices in $X$. A path $P(u, v)$ from vertex $u$ to vertex $v$ is a sequence of vertices $w_1, w_2, \ldots, w_k$ where $w_1 = u$, $w_k = v$, and $w_iw_{i+1} \in E(G)$ for all $i = 1, \ldots, k - 1$. A path is called simple if no vertex appears more than once in the sequence. A graph $G$ is connected if there exists a path between every pair of distinct vertices in $V(G)$. A connected component of $G$ is a maximal connected subgraph of $G$. A cycle is a path that starts and ends at the same vertex, contains at least three vertices, and has no repeated edges or vertices except for the starting and ending vertex. A tree-like multigraph is defined as a connected multigraph that contains no cycles. Note that a tree-like multigraph may contain self-loops and multiple edges while still being cycle-free in the graph-theoretic sense (i.e., having no cyclic paths through distinct vertices). 

For a connected multigraph $G$ (possibly with self-loops and multiple edges), the cycle rank is defined as the minimum number of edges (including self-loops and multiple edges) that must be removed to obtain a simple spanning tree of $G$. A rooted tree-like multigraph is a tree-like multigraph with a specifically designated vertex called the root. Let $P$ be a rooted tree-like multigraph with self-loops and multiple edges. We denote by $r_P$ the root of $P$. For a vertex $v \in V(P) \setminus \{r_P\}$, an ancestor of $v$ is any vertex (other than $v$ itself) that lies on the unique simple path $P(v, r_P)$ from $v$ to the root $r_P$. If $u$ is an ancestor of $v$, then $v$ is called a descendant of $u$. For a vertex $v \in V(P) \setminus \{r_P\}$, the parent of $v$, denoted by $p(v)$, is the unique ancestor $u$ of $v$ such that $uv \in E(P)$. We call $v$ a child of $p(v)$, and vertices with the same parent are called siblings. For a vertex $v \in V(P)$, we denote by $P_v$ the descendant subgraph of $P$ rooted at $v$, which is the subgraph induced by $v$ and all its descendants. Note that $P_v$ is a maximal rooted tree-like multigraph of $P$, rooted at $v$ and containing all structural features (self-loops and multiple edges) present in that portion of $P$. We denote by $v(P_v)$ the number of vertices, by $s(P_v)$ the number of self-loops, and by $e(P_v)$ the number of multiple edges in the descendant subgraph $P_v$. Additionally, for any child $v$ of the root $r_P$, we denote by $e(r_Pv)$ the multiplicity of the edge connecting $r_P$ and $v$ (i.e., the number of multiple edges between the root and $v$). 

According to Jordan~\cite{Jordan1869}, every simple tree with $n \ge 1$ vertices has either a unique vertex (called the unicentroid) or a unique edge (called the bicentroid) such that removing it divides the tree into connected components with specific size properties. The unicentroid is a vertex whose removal creates components each containing at most $\lfloor (n - 1)/2 \rfloor$ vertices. The bicentroid is an edge whose removal creates exactly two components, each containing exactly $n/2$ vertices; such an edge exists only when $n$ is even. Two rooted tree-like multigraphs $P$ and $Q$ with roots $r_P$ and $r_Q$, respectively, are said to be isomorphic if there exists a bijective mapping $\phi: V(P) \to V(Q)$ such that $\phi(r_P) = r_Q$, and for each vertex $v \in V(P)$ it holds that $s(v) = s(\phi(v))$, $uv \in E(P)$ if and only if $\phi(u)\phi(v) \in E(Q)$, and $e(uv) = e(\phi(u)\phi(v))$ for all $u, v \in V(P)$. 

Two unrooted tree-like multigraphs are isomorphic if their canonical rooted forms are isomorphic. Let $n \ge 1$, $s \ge 0$, and $m \ge 0$ be any three integers.  We denote by $\mathcal{P}(n, s, m)$ the set of all mutually non-isomorphic rooted tree-like multigraphs with $n$ vertices, $s$ self-loops, and $m$ multiple edges, and by $p(n, s, m)$ its cardinality, that is, $p(n, s, m) = |\mathcal{P}(n, s, m)|$.

\section{Proposed Method}\label{counting}
We aim to determine the exact count $p(n, s, m)$ of non-isomorphic rooted tree-like multigraphs characterized by a fixed number of vertices and multiple edges, excluding self-loops.  We define the boundary cases as $p(1, 0, 0) = 1$, corresponding to a single vertex with no self-loops or multiple edges, $p(1, s, m) = 0$ for any $s \ge 1$ or $m \ge 1$, since a single vertex cannot have multiple edges, and $p(n, s, m) = 0$ for $n = 0$, as the empty graph is not considered in this case.
\subsection{Canonical representation}
For an enumeration method, it is essential to avoid generating isomorphic graphs. To achieve this, we define a canonical representation for graphs in $\mathcal{P}(n, s, m)$. This representation incorporates information about the number of vertices, self-loops, and multiple edges in each descendant subgraph, as well as the number of multiple edges and self-loops incident to the root in the corresponding underlying multigraphs. Formally, the canonical (or ordered) representation of a multigraph $P \in \mathcal{P}(n, s, m)$ is defined as an ordered rooted multigraph in which the descendant subgraphs of each vertex are arranged from left to right according to the following criteria. Consider a vertex $u$ of $P$ and the descendant subgraphs corresponding to its children. 

(a) First, order the descendant subgraphs in non-increasing order of the number of vertices $v(P_v)$.  
(b) If two or more descendant subgraphs have the same number of vertices, they are ordered in non-increasing order of the number of self-loops $s(P_v)$.  
(c) If two or more descendant subgraphs have both the same number of vertices and the same number of self-loops, they are ordered in non-increasing order of the number of multiple edges $e(P_v)$.  
(d) Finally, if two or more subgraphs remain equivalent under the above criteria, they are ordered in non-increasing order of the multiplicity $e(u v)$ of the edges connecting the parent $u$ to its child $v$, and, if still equal, in non-increasing order of the number of self-loops on the child vertex $s(v)$.

The canonical representation of a tree-like multigraph $P \in \mathcal{P}(n, s, m)$ is thus the unique ordered representation that satisfies conditions (a)–(d) recursively for every vertex. This canonical form ensures that isomorphic tree-like multigraphs are represented identically, thereby enabling enumeration without duplication.

\subsection{Subproblem} 
We define the subproblems for DP based on the descendant subgraphs with the maximum number of vertices, self-loops, and multiple edges.  
For this purpose we define the following terms for each $P \in \mathcal{P}(n, \Delta)$, 
\begin{align*}
{\rm Max}_{\rm v}(P) & \triangleq \max\{\{ {\rm v}(P_{v}): v\in N(r_{P})\}\cup \{0\}\},\\
{\rm Max}_{\rm s}(P) & \triangleq  \max\{\{ {\rm s}(P_{v}): v\in N(r_{P}),  {\rm v}(P_{v}) ={\rm Max}_{\rm v}(P)\} \cup\{0\}\}\\
{\rm Max}_{\rm m}(P) & \triangleq \max \{\{{\rm e}(P_{v}): v\in N(r_{P}), {\rm v}(P_{v}) ={\rm Max}_{\rm v}(P)\} \cup\{0\}\},\\
{\rm Max}_{\rm l}(P) & \triangleq \max \{\{{\rm e}(r_{p} r_{P_{v}}): {\rm v}(P_{v})={\rm Max}_{\rm v}(P)\ \text{ and } {\ {\rm e}(P_{v})} ={\rm Max}_{\rm p}(P)\} \cup \{0\}\}.
\end{align*}
Intuitively, ${\rm Max}_{\rm v}(P)$ denotes the maximum number of vertices in the descendant subgraphs. ${\rm Max}_{\rm s}(P)$ denotes the maximum number of self-loops incident to a vertex $v$ in the descendant subgraphs.
${\rm Max}_{\rm m}(P)$ denotes the maximum number of multiple edges in the descendant subgraphs containing ${\rm Max}_{\rm v}(P)$ vertices. Finally, ${\rm Max}_{\rm l}(M)$ denote the maximum number of multiple edges between the root of $M$ and the roots of the descendant subgraphs containing  ${\rm Max}_{\rm v}(P)$ vertices, ${\rm Max}_{\rm s}(P)$ self-loops,  and ${\rm Max}_{\rm m}(P)$ multiple edges. 

Note that for any multigraph $P \in \mathcal{P}(1, s, m)$, it holds that  ${\rm Max}_{\rm v}(P) = 0$, ${\rm Max}_{\rm s}(P)=0$, $ {\rm Max}_{\rm m}(P) = 0$ and ${\rm Max}_{\rm l}(P) = 0$.

Let $f, g, h, k$ be any four integers such that $f \geq 1$ and $g, h, k \geq 0$. 
We define a subproblem of $\mathcal{P}(n, s, m)$ for DP as follows:
\begin{align*}
\mathcal{P}(n, s, m ,{f}_{\leq}, {g}_{\leq}, {h}_{\leq}, {k}_{\leq})
&\triangleq 
\bigl\{P\in \mathcal{P}(n, s, m) : 
{\rm Max}_{\rm v}(P) \leq f,\ 
{\rm Max}_{\rm s}(P)\leq g,\  \\
&\hspace{0.6cm}
{\rm Max}_{\rm m}(P)\leq h, {\rm Max}_{\rm l}(P)\leq k \bigr\}.
\end{align*}

Note that by the definition of $\mathcal{P}(n, s, m, {f}_{\leq}, {g}_{\leq}, {h}_{\leq}, {k}_{\leq} )$ it holds that 
\begin{enumerate}
[label=\textnormal{(\roman*)}, ref=(\roman*), font=\upshape]
\item $\mathcal{P}(n, s, m, {f}_{\leq}, {g}_{\leq}, {h}_{\leq}, {k}_{\leq} ) = \mathcal{P}(n, s, m, {n-1}_{\leq}, {g}_{\leq}, {h}_{\leq}, {k}_{\leq} )$ if $f \geq n$;
\item $\mathcal{P}(n, s, m, {f}_{\leq}, {g}_{\leq}, {h}_{\leq}, {k}_{\leq} ) =
\mathcal{P}(n, s, m, {f}_{\leq}, {s}_{\leq}, {h}_{\leq}, {k}_{\leq} )$ if $g \geq{ s + 1}$; 
\item $\mathcal{P}(n, s, m, {f}_{\leq}, {g}_{\leq}, {h}_{\leq}, {k}_{\leq} ) =
\mathcal{P}(n, s, m, {f}_{\leq}, {g}_{\leq}, {m}_{\leq}, {k}_{\leq} )$ if $h \geq{ m + 1}$;
\item $\mathcal{P}(n, s, m, {f}_{\leq}, {g}_{\leq}, {h}_{\leq}, {k}_{\leq} ) =
\mathcal{P}(n, s, m, {f}_{\leq}, {g}_{\leq}, {h}_{\leq}, {m}_{\leq} )$ if $k \geq{ m + 1}$;  
\item $\mathcal{P}(n, s, m)= \mathcal{P}(n, s, m, {n-1}_{\leq}, {s}_{\leq}, {m}_{\leq}, {m}_{\leq} )$.
\end{enumerate}
Consequently, we assume that $f \leq n - 1$, $g \leq s$, and $ h + \ k \leq m$.
Moreover, by the definition, it holds that 
$\mathcal{P}(n, s, m, {f}_{\leq}, {g}_{\leq}, {h}_{\leq}, {k}_{\leq} ) \neq \emptyset$ if $``n=1$, $s=0$, and $m=0''$  or $``n-1 \geq f \geq 1"$ and $\mathcal{M}(n, \Delta, {k}_\leq, {d}_\leq, {\ell}_\leq) = \emptyset$ 
if $``n=1$, $s \geq 1$ and $m \geq 1"$ or $``n \geq2$ and $f = 0"$.

We define the subproblems of $\mathcal{P}(n, s, m, {f}_{\leq}, {g}_{\leq}, {h}_{\leq}, {k}_{\leq} )$ based on the maximum number of vertices in the descendant subgraphs:
\begin{equation*}
\mathcal{P}(n, s, m, {f}_{=}, {g}_{\leq}, {h}_{\leq}, {k}_{\leq} )\triangleq 
\{P\in \mathcal{P}(n, s, m, {f}_{\leq}, {g}_{\leq}, {h}_{\leq}, {k}_{\leq} ) :
{\rm Max}_{\rm v}(P) = f\}.
\end{equation*}%
\noindent 
It follows from the definition of $\mathcal{P}(n, s, m, {f}_{=}, {g}_{\leq}, {h}_{\leq}, {k}_{\leq} )$ that $\mathcal{P}(n, s, m, {f}_{=}, {g}_{\leq}, {h}_{\leq}, {k}_{\leq} ) \neq \emptyset$ if $``n = 1$, $s=0$, and $m = 0"$ or $``n-1 \geq f \geq 1"$, and $\mathcal{P}(n, s, m, {f}_{=}, {g}_{\leq}, {h}_{\leq}, {k}_{\leq} ) = \emptyset$ if $``n=1$, $s \geq 1$, and $m \geq 1"$ or $`` n \geq2$ and $f = 0."$ In addition, we have the following recursive relation:
\begin{align}
&\mathcal{P}(n, s, m, {f}_{\leq}, {g}_{\leq}, {h}_{\leq}, {k}_{\leq} )= 
\mathcal{P}(n, s, m, {0}_{=}, {g}_{\leq}, {h}_{\leq}, {k}_{\leq} ) \text{ if } f = 0, \label{equ1} \\ 
&\mathcal{P}(n, s, m, {f}_{\leq}, {g}_{\leq}, {h}_{\leq}, {k}_{\leq} )= 
\mathcal{P}(n, s, m, {f-1}_{\leq}, {g}_{\leq}, {h}_{\leq}, {k}_{\leq} )\cup \mathcal{P}(n, s, m, {f}_{=}, {g}_{\leq}, {h}_{\leq}, {k}_{\leq} )  \nonumber\\
&\hspace{4.3cm}
\text { if } f \geq 1, \label{equ1p}
\end{align}
where
$\mathcal{P}(n, s, m, {f-1}_{\leq}, {g}_{\leq}, {h}_{\leq}, {k}_{\leq} )\cap 
\mathcal{P}(n, s, m, {f}_{=}, {g}_{\leq}, {h}_{\leq}, {k}_{\leq} ) = \emptyset$ for $f \geq 1$.

We define the subproblems of $\mathcal{P}(n, s, m, {f}_{=}, {g}_{\leq}, {h}_{\leq}, {k}_{\leq} )$ based on the maximum number of self-loops in the descendant subgraphs:
\begin{equation*}
\mathcal{P}(n, s, m, {f}_{=}, {g}_{=}, {h}_{\leq}, {k}_{\leq} )\triangleq 
\{P\in \mathcal{P}(n, s, m, {f}_{=}, {g}_{\leq}, {h}_{\leq}, {k}_{\leq} ) :
{\rm Max}_{\rm s}(P) = g\}.
\end{equation*}%
\noindent 
It follows from the definition of $\mathcal{P}(n, s, m, {f}_{=}, {g}_{=}, {h}_{\leq}, {k}_{\leq} )$ that $\mathcal{P}(n, s, m, {f}_{=}, {g}_{=}, {h}_{\leq}, {k}_{\leq} ) \neq \emptyset$ if $``n = 1$, $s=0$, and $m = 0"$ or $``n-1 \geq f \geq 1"$, and $\mathcal{P}(n, s, m, {f}_{=}, {g}_{=}, {h}_{\leq}, {k}_{\leq} ) = \emptyset$ if $``n=1$, $s \geq 1$, and $m \geq 1"$ or $`` n \geq2$ and $f = 0."$ In addition, we have the following recursive relation:
\begin{align}
&\mathcal{P}(n, s, m, {f}_{=}, {g}_{\leq}, {h}_{\leq}, {k}_{\leq} )= 
\mathcal{P}(n, s, m, {f}_{=}, {0}_{=}, {h}_{\leq}, {k}_{\leq} ) \text{ if } g = 0, \label{equ2} \\ 
&\mathcal{P}(n, s, m, {f}_{=}, {g}_{\leq}, {h}_{\leq}, {k}_{\leq} )= 
\mathcal{P}(n, s, m, {f}_{=}, {g-1}_{\leq}, {h}_{\leq}, {k}_{\leq} )\cup \mathcal{P}(n, s, m, {f}_{=}, {g}_{=}, {h}_{\leq}, {k}_{\leq} )\nonumber\\
&\hspace{4.3cm} 
\text { if } g \geq 1, \label{equ2p}
\end{align}
where
$\mathcal{P}(n, s, m, {f}_{=}, {g-1}_{\leq}, {h}_{\leq}, {k}_{\leq} )\cap 
\mathcal{P}(n, s, m, {f}_{=}, {g}_{=}, {h}_{\leq}, {k}_{\leq} ) = \emptyset$ for $g \geq 1$.


We define the subproblems of $\mathcal{P}(n, s, m, {f}_{=}, {g}_{=}, {h}_{=}, {k}_{\leq} )$ based on the maximum number of multiple edges in the descendant subgraphs:
\begin{equation*}
\mathcal{P}(n, s, m, {f}_{=}, {g}_{=}, {h}_{=}, {k}_{\leq} )\triangleq 
\{P\in \mathcal{P}(n, s, m, {f}_{=}, {g}_{=}, {h}_{\leq}, {k}_{\leq} ) :
{\rm Max}_{\rm m}(P) = h\}.
\end{equation*}%
\noindent 
It follows from the definition of $\mathcal{P}(n, s, m, {f}_{=}, {g}_{=}, {h}_{=}, {k}_{\leq} )$ that $\mathcal{P}(n, s, m, {f}_{=}, {g}_{=}, {h}_{=}, {k}_{\leq} ) \neq \emptyset$ if $``n = 1$, $s=0$, and $m = 0"$ or $``n-1 \geq f \geq 1"$, and $\mathcal{P}(n, s, m, {f}_{=}, {g}_{=}, {h}_{=}, {k}_{\leq} ) = \emptyset$ if $``n=1$, $s \geq 1$, and $m \geq 1"$ or $`` n \geq2$ and $f = 0."$ In addition, we have the following recursive relation:
\begin{align}
&\mathcal{P}(n, s, m, {f}_{=}, {g}_{=}, {h}_{\leq}, {k}_{\leq} )= 
\mathcal{P}(n, s, m, {f}_{=}, {g}_{=}, {0}_{=}, {k}_{\leq} ) \text{ if } h = 0, \label{equ3} \\ 
&\mathcal{P}(n, s, m, {f}_{=}, {g}_{=}, {h}_{\leq}, {k}_{\leq} )= 
\mathcal{P}(n, s, m, {f}_{=}, {g}_{=}, {h-1}_{\leq}, {k}_{\leq} )\cup \mathcal{P}(n, s, m, {f}_{=}, {g}_{=}, {h}_{=}, {k}_{\leq} ) \nonumber\\
&\hspace{4.3cm}
\text { if } h \geq 1, \label{equ3p}
\end{align}
where
$\mathcal{P}(n, s, m, {f}_{=}, {g}_{=}, {h-1}_{\leq}, {k}_{\leq} )\cap 
\mathcal{P}(n, s, m, {f}_{=}, {g}_{=}, {h}_{=}, {k}_{\leq} ) = \emptyset$ for $h \geq 1$.


We define the subproblems of $\mathcal{P}(n, s, m, {f}_{=}, {g}_{=}, {h}_{=}, {k}_{\leq} )$ based on the maximum number of multiple edges between roots and their children  in the  multigraphs:
\begin{equation*}
\mathcal{P}(n, s, m, {f}_{=}, {g}_{=}, {h}_{=}, {k}_{=} )\triangleq 
\{P\in \mathcal{P}(n, s, m, {f}_{=}, {g}_{=}, {h}_{=}, {k}_{\leq} ) :
{\rm Max}_{\rm l}(P) = k\}.
\end{equation*}%
\noindent 
It follows from the definition of $\mathcal{P}(n, s, m, {f}_{=}, {g}_{=}, {h}_{=}, {k}_{=} )$ that $\mathcal{P}(n, s, m, {f}_{=}, {g}_{=}, {h}_{=}, {k}_{=} ) \neq \emptyset$ if $``n = 1$, $s=0$, and $m = 0"$ or $``n-1 \geq f \geq 1"$, and $\mathcal{P}(n, s, m, {f}_{=}, {g}_{=}, {h}_{=}, {k}_{=} ) = \emptyset$ if $``n=1$, $s \geq 1$, and $m \geq 1"$ or $`` n \geq2$ and $f = 0."$ In addition, we have the following recursive relation:
\begin{align}
&\mathcal{P}(n, s, m, {f}_{=}, {g}_{=}, {h}_{=}, {k}_{\leq} )= 
\mathcal{P}(n, s, m, {f}_{=}, {g}_{=}, {h}_{=}, {0}_{=} ) \text{ if } k = 0, \label{equ4} \\ 
&\mathcal{P}(n, s, m, {f}_{=}, {g}_{=}, {h}_{=}, {k}_{\leq} )= 
\mathcal{P}(n, s, m, {f}_{=}, {g}_{=}, {h}_{\leq}, {k-1}_{\leq} )\cup \mathcal{P}(n, s, m, {f}_{=}, {g}_{=}, {h}_{=}, {k}_{=} ) \nonumber\\
&\hspace{4.3cm}
\text { if } k \geq 1, \label{equ4p}
\end{align}
where
$\mathcal{P}(n, s, m, {f}_{=}, {g}_{=}, {h}_{\leq}, {k-1}_{\leq} )\cap 
\mathcal{P}(n, s, m, {f}_{=}, {g}_{=}, {h}_{=}, {k}_{=} ) = \emptyset$ for $h \geq 1$.

\subsection{Recursive relation}
To obtain recursive relations for computing the size of 
$\mathcal{P}(n, s, m, {f}_{=}, {g}_{=}, {h}_{=}, {k}_{=} )$, it is essential to analyze the types of descendant subgraphs. 
Let $P$ be a multigraph such that $P \in \mathcal{P}(n, s, m, {f}_{=}, {g}_{=}, {h}_{=}, {k}_{=} )$. 
Then it is easy to observe that for any vertex $v\in N(r_{p})$, the descendant subgraph $P_{v}$ satisfies exactly one of the following four conditions:

\begin{enumerate}[label=(C\arabic*), ref=C\arabic*]

\item \label{con:condit11}$ {\rm v}(P_{v}) = f$, $ {\rm s}(P_{v}) = g$, ${\rm e}(P_{v}) = h$ and ${\rm e}({r_{P} r_{p_{v}}}) = k$. 

\item \label{con:condit12}$ {\rm v}(P_{v}) = f$, $ {\rm s}(P_{v}) = g$, ${\rm e}(P_{v}) = h$ and $0 \leq {\rm e}({r_{P} r_{p_{v}}}) < k$. 

\item \label{con:condit13}$ {\rm v}(P_{v}) = f$, $ {\rm s}(P_{v}) = g$, $0 \leq {\rm e}(P_{v}) < h$ and $0 \leq {\rm e}({r_{P} r_{p_{v}}}) \leq m$. 

\item \label{con:condit14}$ {\rm v}(P_{v}) = f$, $0 \leq {\rm s}(P_{v}) < g$, $0 \leq {\rm e}(P_{v}) \leq m$ and $0 \leq {\rm e}({r_{P} r_{p_{v}}}) \leq m$.

\item \label{con:condit15}$ {\rm v}(P_{v}) < f$, $0 \leq {\rm s}(P_{v}) \leq s$, $0 \leq {\rm e}(P_{v}) \leq m$ and $0 \leq {\rm e}({r_{P} r_{p_{v}}}) \leq m$.\\

\end{enumerate}
Further, an example of a recursive relation is illustrated in Fig.~\ref{fig:relation}.
\begin{figure}[h!]
    \centering
    \includegraphics[width=0.5\textwidth]{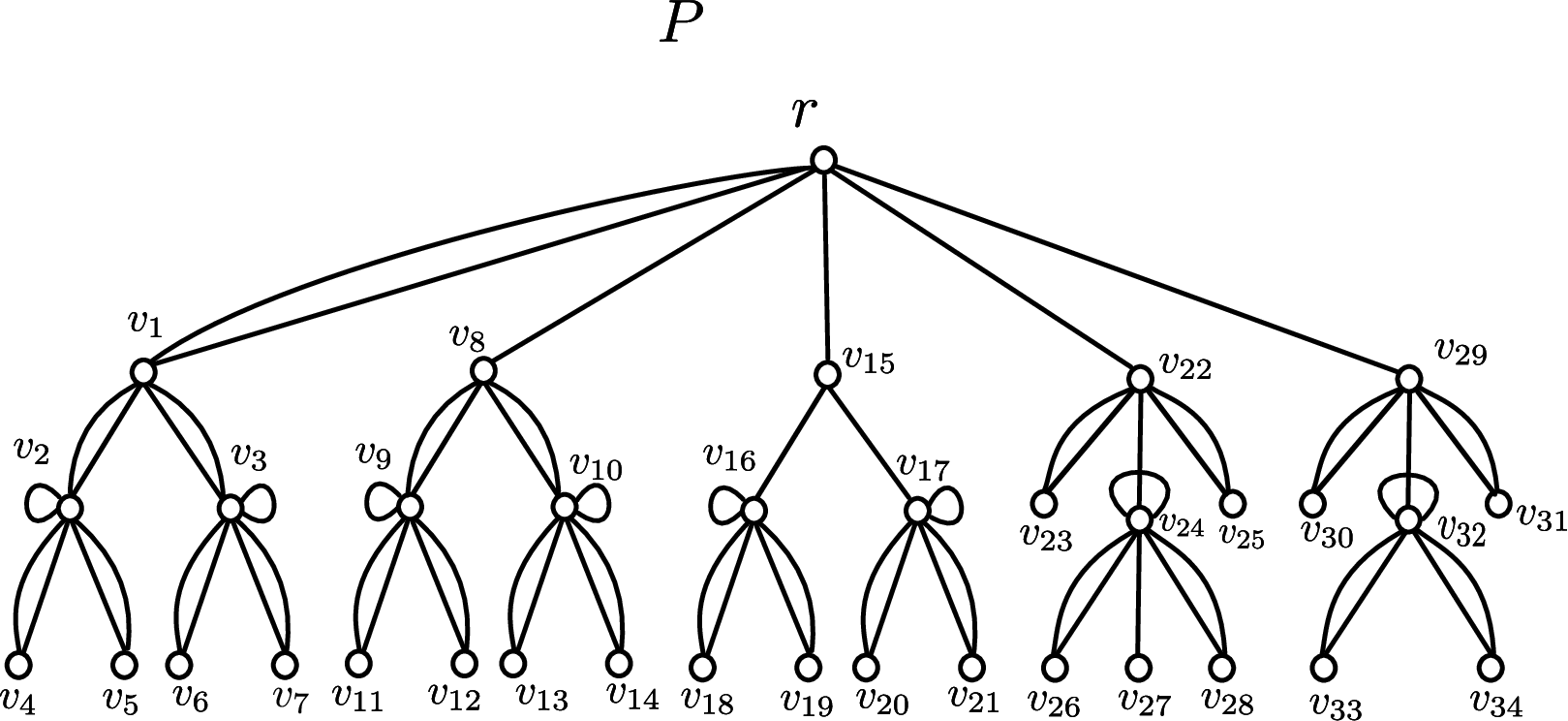}
    \caption{Example of recursive relation.}
    \label{fig:relation}
\end{figure}

\noindent We define the residual of $P \in \mathcal{P}(n, s, m, {f}_{=}, {g}_{=}, {h}_{=}, {k}_{=} )$ to be multigraph rooted at $r_P$ that is induced by the vertices 
$V(P) \setminus  \bigcup\limits_{\mathclap{\substack{v \in N(r_P),\\ 
 P_v \in \mathcal{P}(f,g, h, \,f{-}1_\le,\,g_\le,\,h_\le, h_\le)}}}{ V}(P_v).$  Basically, a residual tree-like graph consists of those descendant subgraphs whose structure is not clear. The residual tree of a tree $P$ has at least one vertex, i.e., the root of $P$.
 An illustration of a residual multigraph is given in Fig.~\ref{fig:recursive_rel}. 
\begin{figure}[h!]
    \centering
    \includegraphics[width=0.9\textwidth]{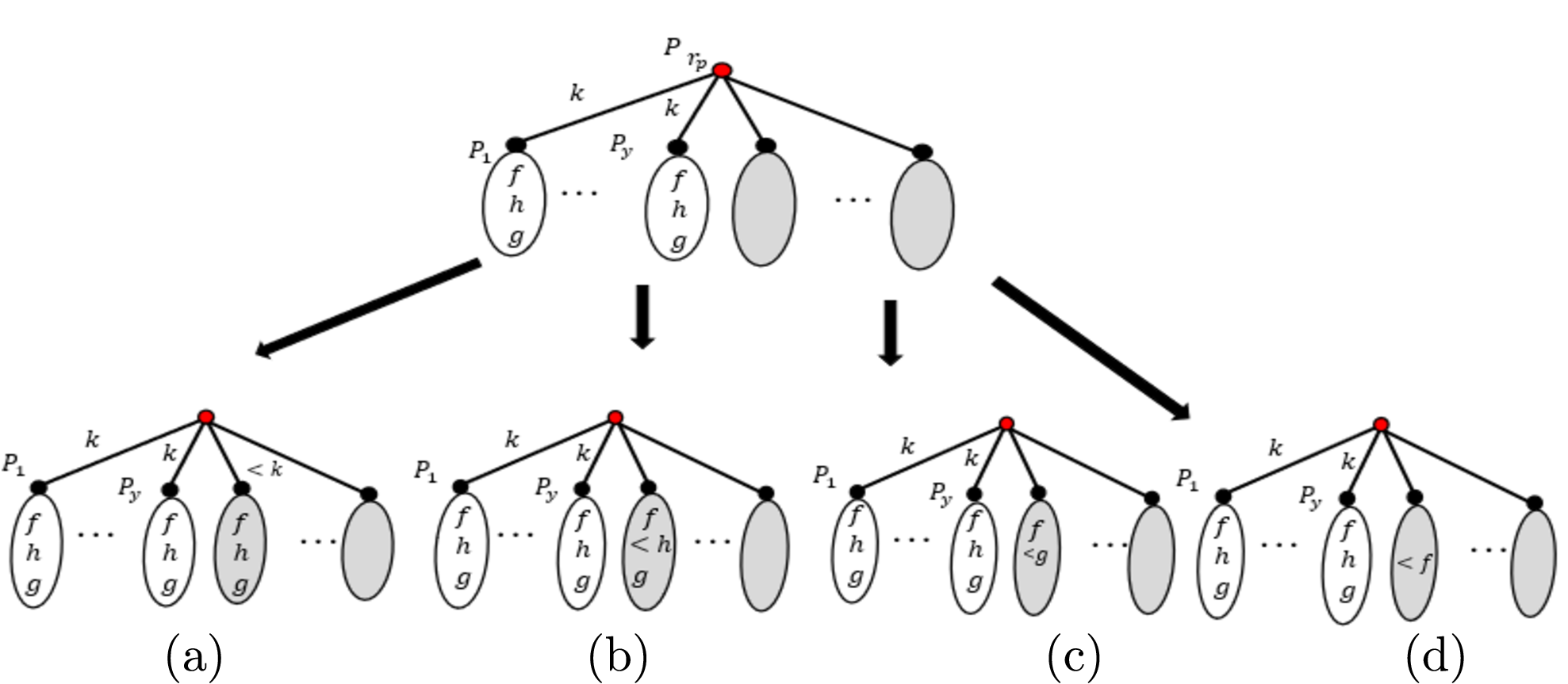}
    \caption{An illustration of the residual tree depicted with a dashed line, and the descendant subgraph is depicted with solid lines.}
    \label{fig:recursive_rel}
\end{figure}
\begin{lemma}
\label{lem_1}
For any seven integers $n \geq 3$, $f \geq 1$, $s \geq g \geq 0$, $m \geq h+k \geq 0$, and a tree-like multigraph $P\in 
\mathcal{P}(n, s, m, {f}_{=}, {g}_{=}, {h}_{=}, {k}_{=} ),$ let~$y = |\{ v\in N(r_{P}) :
 P_v \in \mathcal{P}(f,g, h, \,f{-}1_\le,\,g_\le,\,h_\le, h_\le)\}|$. 
 Then it holds that
\begin{enumerate}[label=(\roman*), ref  = (\roman*), font=\upshape]
\item $1 \leq y \leq \lfloor n-1 / f \rfloor$  with  $y \leq \lfloor s / g \rfloor$ when  $g \geq 1$ and $y \leq \lfloor m / (h+k) \rfloor$ when $h+k \geq 1 $. \label{L_1_i}
\item \label{L_1_ii} The residual tree of $P$ belongs to exactly one of the following families:\\
$\mathcal{P}(n-yf, s-yg, m-y(h+k), {f}_{=}, {g}_{=}, {h}_{=}, min \{ m-y(h+k), k-1\}_{\leq} );$\\
$\mathcal{P}(n-yf, s-yg, m-y(h+k), {f}_{=}, {g}_{=}, min \{ m-y(h+k), h-1\}_{\leq}, m-y(h+k)_{\leq} );$\\
$\mathcal{P}(n-yf, s-yg, m-y(h+k), {f}_{=}, min \{ s-yg, g-1\}_{\leq}, m-y(h+k)_{\leq} , m-y(h+k)_{\leq} );$\\
$\mathcal{P}(n-yf, s-yg, m-y(h+k), min \{ n-yf-1, f-1\}_{\leq},  s-yg_{\leq}, m-y(h+k)_{\leq} , m-y(h+k)_{\leq} ).$
\end{enumerate}
\end{lemma}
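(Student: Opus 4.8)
The plan is to establish (i) by a disjointness count and (ii) by a case analysis on the largest type of child that survives in the residual tree. Throughout, fix $P\in\mathcal{P}(n,s,m,{f}_{=},{g}_{=},{h}_{=},{k}_{=})$, let $D=\{v\in N(r_P):v\text{ satisfies (C1)}\}$ so that $y=|D|$, and let $R$ be the residual tree, obtained from $P$ by deleting $\bigcup_{v\in D}V(P_v)$.

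For (i): the bound $y\ge 1$ holds because, by the canonical representation, the leftmost descendant subgraph of $r_P$ simultaneously attains all four maxima ${\rm Max}_{\rm v}(P)=f$, ${\rm Max}_{\rm s}(P)=g$, ${\rm Max}_{\rm m}(P)=h$, ${\rm Max}_{\rm l}(P)=k$, so its root satisfies (C1). For the upper bounds, note that for distinct children $v,v'$ of $r_P$ the subgraphs $P_v,P_{v'}$ are vertex-disjoint, omit $r_P$, and are edge-disjoint from one another and from the edges $r_P v$. Hence every $v\in D$ accounts for $f$ distinct vertices, $g$ self-loops, and $h+k$ multiple edges (the $h$ inside $P_v$ together with the $k$ on $r_P v$), all disjoint over $v\in D$; summing against $|V(P)|=n$, $s(P)=s$, $e(P)=m$ gives $yf\le n-1$, and $yg\le s$ when $g\ge 1$, and $y(h+k)\le m$ when $h+k\ge 1$. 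Dividing and taking floors yields (i); in particular $n-yf\ge 1$, $s-yg\ge 0$, $m-y(h+k)\ge 0$, so the parameters of $R$ are admissible.

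For (ii): first, $R$ is again a rooted tree-like multigraph with root $r_P$, since only entire descendant subtrees of children of $r_P$ were removed, and by the disjointness above $v(R)=n-yf$, $s(R)=s-yg$, $e(R)=m-y(h+k)$. Its children of the root are exactly those of $r_P$ not in $D$, i.e. the ones satisfying (C2), (C3), (C4) or (C5), with their descendant subgraphs and the multiplicities $e(r_P v)$ unchanged. Record the trivial bounds ${\rm Max}_{\rm v}(R)\le v(R)-1$, ${\rm Max}_{\rm s}(R)\le s(R)$, ${\rm Max}_{\rm m}(R)\le e(R)$, ${\rm Max}_{\rm l}(R)\le e(R)$, and the monotonicity bounds inherited from $P$: every child $w$ of $r_P$ has $v(P_w)\le f$; if $v(P_w)=f$ then $s(P_w)\le g$; if also $s(P_w)=g$ then $e(P_w)\le h$; if moreover $e(P_w)=h$ then $e(r_P w)\le k$. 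Now split by the largest type of child surviving in $R$. If some surviving child satisfies (C2): monotonicity forces ${\rm Max}_{\rm v}(R)=f$, ${\rm Max}_{\rm s}(R)=g$, ${\rm Max}_{\rm m}(R)=h$ (each attained by that child), while every surviving child with profile $(f,g,h)$ has $e(r_P w)<k$, so ${\rm Max}_{\rm l}(R)\le k-1$; combined with ${\rm Max}_{\rm l}(R)\le e(R)$ this places $R$ in the first listed family. If no surviving child satisfies (C2) but some satisfies (C3): then ${\rm Max}_{\rm v}(R)=f$, ${\rm Max}_{\rm s}(R)=g$, and every surviving child with $v(P_w)=f,\ s(P_w)=g$ has $e(P_w)<h$, so ${\rm Max}_{\rm m}(R)\le h-1$, giving the second family. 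If neither occurs but (C4) does: ${\rm Max}_{\rm v}(R)=f$ and every surviving child with $v(P_w)=f$ has $s(P_w)<g$, so ${\rm Max}_{\rm s}(R)\le g-1$, giving the third family. Otherwise $R$ has only (C5)-children, or none, so ${\rm Max}_{\rm v}(R)\le f-1$, giving the fourth family. In each case the remaining coordinates are covered by the trivial bounds, taking minima exactly as written in the statement; the four cases partition the possibilities, and the four families are pairwise disjoint (the fourth, fifth and sixth coordinates are forced to equal $f$, $g$, $h$ in some of them and forced strictly below in the others), so $R$ belongs to exactly one.

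The main obstacle I anticipate is the bookkeeping in (ii): in each of the four cases one must check both that the coordinates flagged ``$=$'' in the matching family are genuinely attained (not merely bounded) by the surviving child of the relevant type, and that the coordinates flagged ``$\le$'' are simultaneously dominated --- by $k-1$, $h-1$, $g-1$ or $f-1$ via monotonicity and by $e(R)$, $s(R)$ or $v(R)-1$ via the trivial bounds. A further point to state carefully is that the residual removes exactly the (C1)-subgraphs --- those joined to $r_P$ by precisely $k$ extra edges --- which is what makes $e(R)=m-y(h+k)$, and hence $y$ a count of (C1)-children only.
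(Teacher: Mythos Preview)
Your proof is correct and follows the same strategy as the paper --- a disjointness count for (i) and a case split on which of (C2)--(C5) survives in the residual for (ii) --- though your version is more carefully organised (you make the partition in (ii) explicit via ``highest surviving type'' and verify both the ``$=$'' and ``$\le$'' constraints in each case, where the paper merely asserts them). You have also rightly flagged the definitional point that $y$ must count only (C1) children for $e(R)=m-y(h+k)$ to hold; the paper's own proof, like yours, tacitly works under that reading.
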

\begin{proof}
\begin{enumerate}[label=\textnormal{(\roman*)}, ref=(\roman*), font=\upshape]
\item Since $P\in 
\mathcal{P}(n, s, m, {f}_{=}, {g}_{=}, {h}_{=}, {k}_{=} )$ there exist at least $v \in N(r_p)$ such that $P_v \in \mathcal{P}(f,g, h, \,f{-}1_\le,\,g_\le,\,h_\le, h_\le)$. It follows that $y \geq 1$. It further assert that $n-1 \geq yf$, $s \geq yg$ and $m \geq y(h+k)$. This concludes that $ y \leq \lfloor n-1 / f \rfloor$  with  $y \leq \lfloor s / g \rfloor$ when  $g \geq 1$ and $y \leq \lfloor m / (h+k) \rfloor$ when $h+k \geq 1 $.
\item Let $F$ represents the residual tree of $P$. 
By the definition of $F$, it holds that $F \in \mathcal{P}(n-yf, s-yg, m-y(h+k), n-yf-1_{\leq},  s-yg_{\leq}, m-y(h+k)_{\leq} , m-y(h+k)_{\leq} )$. Moreover, for each vertex $v\in N(r_{P})\cap V(F),$ the descendant subgraph $P_{v}$ satisfies exactly one of the Conditions~\ref{con:condit12}-\ref{con:condit15}. Now, if there exists a vertex $v\in N(r_{P})\cap V(F)$ such that $P_v$ satisfies Condition~\ref{con:condit12} as illustrated in Fig.~\ref{fig:recursive_rel}(a), then $f-1 \geq 0$ and $F \in \mathcal{P}(n-yf, s-yg, m-y(h+k), {f}_{=}, {g}_{=}, {h}_{=}, min \{ m-y(h+k), k-1\}_{\leq} )$.
For $v\in N(r_{P}) \cap V(F)$ such that $P_v$ satisfies the condition ~\ref{con:condit13} as illustrated in Fig.~\ref{fig:recursive_rel}(b) i.e., $v(P_v)=f$, $s(P_v)=g$, $0 \leq e(P_v) \leq min\{m-y(h+k), h-1\}$ and $0 \leq e(r_{p}r_{P_{v}} \leq m- y(h+k)$, then residual tree $F \in \mathcal{P}(n-yf, s-yg, m-y(h+k), {f}_{=}, {g}_{=}, min \{ m-y(h+k), h-1\}_{\leq}, m-y(h+k)_{\leq} )$. For $v\in N(r_{P}) \cap V(F)$ such that $P_v$ satisfies the condition ~\ref{con:condit14} as illustrated in Fig.~\ref{fig:recursive_rel}(c) i.e., $g-1 \geq 0$ hence $v(P_v)=f$, $0 \leq s(P_v) \leq min\{m-yg, g-1\}$, $0 \leq e(P_{v}) \leq m- y(h+k)$,  and $0 \leq e(r_{p}r_{P_{v}} \leq m- y(h+k)$, then residual tree $F \in \mathcal{P}(n-yf, s-yg, m-y(h+k), {f}_{=}, \{ m-yg, g-1\}_{\leq}, m-y(h+k)_{\leq}, m-y(h+k)_{\leq} )$. If $P_v$ satisfies the ~\ref{con:condit15} as illustrated in Fig.~\ref{fig:recursive_rel}(c), $v(P_v)< f$, $0 \leq s(P_v) \leq m-yg$, $0 \leq e(P_{v}) \leq m- y(h+k)$,  and $0 \leq e(r_{p}r_{P_{v}} \leq m- y(h+k)$, then by definition of $F$, it holds that   $F \in \mathcal{P}(n-yf, s-yg, m-y(h+k), min\{n-fy-1, f-1\}_{\leq}, \{ m-yg, g-1\}_{\leq}, m-y(h+k)_{\leq}, m-y(h+k)_{\leq} )$.
\end{enumerate}
\end{proof} 

Let $n,s,m, f, g, h$ and $k$ be any seven non-negative integers such that $n-1 \geq f \geq 0$, $s \geq g \geq 0$, and $m  \geq h+k \geq 0$. Let $p(n, s, m, {f}_{\leq}, {g}_{\leq}, {h}_{\leq}, {k}_{\leq} )$, $p(n, s, m, {f}_{=}, {g}_{\leq}, {h}_{\leq}, {k}_{\leq} )$, $p(n, s, m, {f}_{=}, {g}_{=}, {h}_{\leq}, {k}_{\leq} )$, $p(n, s, m, {f}_{=}, {g}_{=}, {h}_{=}, {k}_{\leq} )$ and $p(n, s, m, {f}_{=}, {g}_{=}, {h}_{=}, {k}_{=} )$ represent the families $\mathcal{P}(n, s, m, {f}_{\leq}, {g}_{\leq}, {h}_{\leq}, {k}_{\leq} )$, $\mathcal{P}(n, s, m, {f}_{=}, {g}_{\leq}, {h}_{\leq}, {k}_{\leq} )$, $\mathcal{P}(n, s, m, {f}_{=}, {g}_{=}, {h}_{\leq}, {k}_{\leq} )$, $\mathcal{P}(n, s, m, {f}_{=}, {g}_{=}, {h}_{=}, {k}_{\leq} )$ and $\mathcal{P}(n, s, m, {f}_{=}, {g}_{=}, {h}_{=}, {k}_{=} )$. For any eight integers, $n \geq 3$, $f \geq 1$, $s \geq g \geq 0$, $m \geq h+k \geq 0$ and $z \geq 0$, let
 $$w(f, g, h ; z) \triangleq \binom{p(f,g, h, \,f{-}1_\le,\,g_\le,\,h_\le, h_\le)+ z - 1}{z}$$ denote the number of combinations with repetition of $z$ descendant subgraphs from the family 
$\mathcal{P}(f,g, h, \,f{-}1_\le,\,g_\le,\,h_\le, h_\le)$. 
With all necessary information, we are ready to discuss a recursive relation for 
$p(n, s, m, {f}_{=}, {g}_{=}, {h}_{=}, {k}_{=} )$ in~Lemma~\ref{lem_2}.
\begin{lemma}
\label{lem_2}
 For any eight integers, $n \geq 3$, $f \geq 1$, $s \geq g \geq 0$, $m \geq h+k \geq 0$ and $y \geq 0$, such that $1 \leq y \leq \lfloor n-1 / f \rfloor$  with  $y \leq \lfloor s / g \rfloor$ when  $g \geq 1$ and $y \leq \lfloor m / (h+k) \rfloor$ when $h+k \geq 1 $, it holds that
\begin{enumerate}[label=(\roman*), ref  = (\roman*), font=\upshape]
\item \label{2_i} $p(n, s, m, {f}_{=}, {g}_{=}, {h}_{=}, {k}_{=} )= \sum_{y}w(f,g,h; y)(p(n-yf, s, m, min\{n-yf-1, f-1\}_{\leq}, s_{\leq}, m_{\leq}, m_{\leq}))$ if $g=0, h=0,$ and $k=0$.

\item \label{2_ii} $p(n, s, m, {f}_{=}, {g}_{=}, {h}_{=}, {k}_{=} )= \sum_{y}w(f,g,h; y)( p(n-yf, s-yg, m-yh, f_{=}, g_{=}, min\{m-yh, h-1\}_{\leq}, m-yh_{\leq})+ p(n-yf, s-yg, m-yh, f_{=}, min\{m-yg, g-1\}_{\leq},m-yh_{\leq}, m-yh_{\leq})+ p(n-yf, s-yg, m-yh, min\{n-yf-1, f-1\}_{\leq}, s-yg_{\leq}, m-yh_{\leq}, m-yh_{\leq}))$ if $g \geq 1, h \geq 1,$ and $k=0$.

\item \label{2_iii} $p(n, s, m, {f}_{=}, {g}_{=}, {h}_{=}, {k}_{=} )= \sum_{y}w(f,g,h; y)(p(n-yf, s-yg, m-yh, f_{=}, min\{m-yg, g-1\}_{\leq},m_{\leq}, m_{\leq})+p(n-yf, s-yg, m, min\{n-yf-1, f-1\}_{\leq}, s-yg_{\leq}, m_{\leq}, m_{\leq}))$ if $g \geq 1, h =0,$ and $k=0$.

\item \label{2_iv} $p(n, s, m, {f}_{=}, {g}_{=}, {h}_{=}, {k}_{=} )= \sum_{y}w(f,g,h; y)(p(n-yf, s, m-yh, f_{=}, g_{=}, min\{m-yh, h-1\}_{\leq}, m-yh_{\leq})+p(n-yf, s, m-yh, min\{n-yf-1, f-1\}_{\leq}, s_{\leq}, m-yh_{\leq}, m-yh_{\leq}))$ if $g =0, h \geq 1,$ and $k=0$.

\item \label{2_v} $p(n, s, m, {f}_{=}, {g}_{=}, {h}_{=}, {k}_{=} )= \sum_{y}w(f,g,h; y)(p(n-yf, s, m-yk, f_{=}, g_{=}, h_{=}, min\{m-yk, k-1\}_{\leq})+p(n-yf, s, m-yk, min\{n-yf-1, f-1\}_{\leq}, s_{\leq}, m-yk_{\leq}, m-yk_{\leq}))$ if $g =0, h =0,$ and $k \geq 1$.

\item \label{2_vi} $p(n, s, m, {f}_{=}, {g}_{=}, {h}_{=}, {k}_{=} )= \sum_{y}w(f,g,h; y)(p(n-yf, s-yg, m-yk, f_{=}, g_{=}, h_{=}, min\{m-yk, k-1\}_{\leq})+p(n-yf, s-yg, m-yk, f_{=}, min\{m-yg, g-1\}_{\leq},m-yk_{\leq}, m-yk_{\leq})+p(n-yf, s-yg, m-yk, min\{n-yf-1, f-1\}_{\leq}, s-yg_{\leq}, m-yk_{\leq}, m-yk_{\leq}))$ if $g \geq 1, h =0,$ and $k \geq 1$.

\item \label{2_vii} $p(n, s, m, {f}_{=}, {g}_{=}, {h}_{=}, {k}_{=} )= \sum_{y}w(f,g,h; y)(p(n-yf, s, m-y(h+k), f_{=}, g_{=}, h_{=}, min\{m-y(h+k), k-1\}_{\leq})+p(n-yf, s, m-y(h+k), f_{=}, g_{=}, min\{m-y(h+k), h-1\}_{\leq}, m-y(h+k)_{\leq})+p(n-yf, s, m-y(h+k), min\{n-yf-1, f-1\}_{\leq}, s_{\leq}, m-y(h+k)_{\leq}, m-y(h+k)_{\leq}))$ if $g =0, h \geq 1,$ and $k \geq 1$.

\item \label{2_viii} $p(n, s, m, {f}_{=}, {g}_{=}, {h}_{=}, {k}_{=} )= \sum_{y}w(f,g,h; y)(p(n-yf, s-yg, m-y(h+k), f_{=}, g_{=}, h_{=}, min\{m-y(h+k), k-1\}_{\leq})+p(n-yf, s-yg, m-y(h+k), f_{=}, g_{=}, min\{m-y(h+k), h-1\}_{\leq}, m-y(h+k)_{\leq})+p(n-yf, s-yg, m-y(h+k), f_{=}, min\{m-yg, g-1\}_{\leq}, m-y(h+k)_{\leq}, m-y(h+k)_{\leq})+p(n-yf, s-yg, m-y(h+k), min\{n-yf-1, f-1\}_{\leq}, s-yg_{\leq}, m-y(h+k)_{\leq}, m-y(h+k)_{\leq}))$ if $g \geq 1, h \geq 1,$ and $k \geq 1$.
\end{enumerate}
\end{lemma}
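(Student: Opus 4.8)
The plan is to establish the recurrence by partitioning $\mathcal{P}(n, s, m, f_=, g_=, h_=, k_=)$ according to the integer $y$ defined in Lemma~\ref{lem_1}, i.e.\ the number of children $v$ of the root whose descendant subgraph $P_v$ lies in the ``maximal'' family $\mathcal{P}(f,g,h,f{-}1_\le,g_\le,h_\le,h_\le)$, and then, for each fixed $y$, to exhibit a bijection between the members of this subfamily and pairs consisting of (a) a multiset of $y$ such maximal descendant subgraphs and (b) a residual multigraph of the type described in Lemma~\ref{lem_1}\ref{L_1_ii}. The first component is counted by $w(f,g,h;y)$, since a multiset of size $y$ drawn from a set of $p(f,g,h,f{-}1_\le,g_\le,h_\le,h_\le)$ objects is exactly a combination with repetition, giving the binomial coefficient in the definition of $w$; the second component is counted by the appropriate $p(\cdot)$ term. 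Summing over all admissible $y$ — whose range is exactly $1\le y\le\lfloor (n-1)/f\rfloor$ together with the divisibility-type bounds $y\le\lfloor s/g\rfloor$ (when $g\ge1$) and $y\le\lfloor m/(h+k)\rfloor$ (when $h+k\ge1$), as proved in Lemma~\ref{lem_1}\ref{L_1_i} — yields the stated formula. The eight cases (i)--(viii) correspond precisely to the $2^3=8$ sign patterns of $(g,h,k)$ being zero or positive; in each case some of the four residual families from Lemma~\ref{lem_1}\ref{L_1_ii} collapse or become empty (for instance, when $k=0$ there is no Condition~\ref{con:condit12} child so the first residual family drops out, and when $g=0$ one has $s-yg=s$ and the $\min\{\cdot,g-1\}$ family is vacuous), which is why the number of summands in the inner sum shrinks accordingly.

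First I would fix $y$ in the admissible range and define the map $\Phi$ sending $P\in\mathcal{P}(n,s,m,f_=,g_=,h_=,k_=)$ with exactly $y$ maximal children to the pair $(\mu_P, F_P)$, where $\mu_P$ is the multiset of canonical rooted forms of the $y$ descendant subgraphs $P_v$ with $P_v\in\mathcal{P}(f,g,h,f{-}1_\le,g_\le,h_\le,h_\le)$, and $F_P$ is the residual multigraph. That $\Phi$ is well-defined on the target requires Lemma~\ref{lem_1}: part~\ref{L_1_i} guarantees the multiset has the right size and that $y$ lies in range, and part~\ref{L_1_ii} identifies the family of $F_P$. For injectivity I would use the canonical (ordered) representation from Section~\ref{counting}: since the children of the root are linearly ordered by criteria (a)--(d), and the maximal children are precisely an initial segment of that order (they are the lexicographically largest in the $(\mathrm{v},\mathrm{s},\mathrm{m},e(r_Pv))$-key), knowing $\mu_P$ and $F_P$ determines the full ordered tree, hence $P$ up to isomorphism. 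For surjectivity, given any multiset $\mu$ of $y$ elements of $\mathcal{P}(f,g,h,f{-}1_\le,g_\le,h_\le,h_\le)$ and any residual $F$ in one of the four listed families, I would reattach the $y$ subgraphs to $r_F$ (with the edge multiplicities already recorded inside each subgraph's data as $e(r\,r_{P_v})$, respecting the $h_\le$ bound built into the maximal family) and check that the resulting $P$ indeed has $\mathrm{Max}_{\rm v}=f$, $\mathrm{Max}_{\rm s}=g$, $\mathrm{Max}_{\rm m}=h$, $\mathrm{Max}_{\rm l}=k$ exactly — the ``$=k$'' part is where the condition that $F$ comes from a family with strictly smaller parameters (the $\min\{\cdot,k-1\}_\le$, etc.) is used, ensuring no extra maximal child is accidentally created and the maximum stays pinned at $k$.

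Then I would verify the disjointness of the union over distinct values of $y$: two multigraphs with different numbers of maximal children cannot be isomorphic, because the count of children achieving the maximal key is an isomorphism invariant. Combined with the bijection for each fixed $y$, this gives $p(n,s,m,f_=,g_=,h_=,k_=)=\sum_y w(f,g,h;y)\cdot(\text{residual count})$, and I would finish by specializing to each of the eight sign patterns of $(g,h,k)$, reading off from Lemma~\ref{lem_1}\ref{L_1_ii} which residual families survive and simplifying $s-yg\to s$, $m-yh\to m$, $m-y(h+k)\to m-yk$, etc., whenever the corresponding parameter is zero. The routine but slightly tedious part is this case bookkeeping; the conceptual content is entirely in the bijection.

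**The hard part will be** pinning down surjectivity and the ``exactly equal'' conditions simultaneously: one must check that reattaching a multiset of maximal subgraphs to a residual $F$ drawn from a family with one parameter strictly decremented (say $\mathrm{Max}_{\rm m}\le h-1$ in $F$) genuinely produces a multigraph whose global maxima are $(f,g,h,k)$ and not something larger or smaller — in particular that the residual's own descendant subgraphs never tie or exceed the maximal key in a way that would move $P$ into a different cell of the partition, and conversely that every such $P$ decomposes uniquely. This is exactly the place where the four-way split of residual families in Lemma~\ref{lem_1}\ref{L_1_ii} is forced, and getting the $\min$ expressions right (so that, e.g., $m-y(h+k)$ rather than $m$ appears as the ceiling when both $h,k\ge1$) is the delicate bookkeeping I would take most care with.
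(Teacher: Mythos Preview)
Your approach is essentially the same as the paper's: partition $\mathcal{P}(n,s,m,f_=,g_=,h_=,k_=)$ by the number $y$ of ``maximal'' children, invoke Lemma~\ref{lem_1} for the range of $y$ and the classification of the residual, count the multisets by $w(f,g,h;y)$, and then sum. Your write-up is in fact more explicit than the paper's about the bijection and about why the eight cases arise from the $2^3$ sign patterns of $(g,h,k)$.

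One point to tighten: in the surjectivity step you write that the edge multiplicities $e(r_P\,r_{P_v})$ are ``already recorded inside each subgraph's data \ldots\ respecting the $h_\le$ bound built into the maximal family''. That is not quite right. The last parameter $h_\le$ in $\mathcal{P}(f,g,h,f{-}1_\le,g_\le,h_\le,h_\le)$ bounds $\mathrm{Max}_{\mathrm l}(P_v)$, i.e.\ edge multiplicities \emph{inside} $P_v$; the multiplicity of the edge joining $r_P$ to the child $v$ is not part of $P_v$ at all. What makes the bijection work is simpler: every one of the $y$ maximal children satisfies Condition~\ref{con:condit11}, so each is attached to $r_F$ with multiplicity exactly $k$, and this value is fixed by the ambient family $\mathcal{P}(n,s,m,f_=,g_=,h_=,k_=)$. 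Thus the reattachment consumes $y(h+k)$ multiple edges ($yh$ inside the subgraphs, $yk$ on the connecting edges), which is exactly why $m-y(h+k)$ appears in the residual parameters. Once you state it this way, the surjectivity check and the parameter bookkeeping you flag as ``the hard part'' go through without difficulty.
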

\begin{proof}
Let us consider a tree-like multigraph $P$ from the family $\mathcal{P}(n, s, m, {f}_{=}, {g}_{=}, {h}_{=}, {k}_{=} )$. Since $p(f,g, h, \,f{-}1_\le,\,g_\le,\,h_\le, h_\le)$ represent the number of tree-like multigraphs in the family $\mathcal{P}(f,g, h, \,f{-}1_\le,\,g_\le,\,h_\le, h_\le)$. By Lemma~\ref{lem_1}\ref{L_1_i} for the specific value of $y$ as $1 \leq y \leq \lfloor n-1 / f \rfloor$  with  $y \leq \lfloor s / g \rfloor$ when  $g \geq 1$ and $y \leq \lfloor m / (h+k) \rfloor$ when $h+k \geq 1 $, there are exactly $y$ descendent subgraphs $P_v$ for $v \in N(r_P)$ and $P_v \in \mathcal{P}(f,g, h, \,f{-}1_\le,\,g_\le,\,h_\le, h_\le)$. As  $\binom{p(f,g, h, \,f{-}1_\le,\,g_\le,\,h_\le, h_\le)+ y - 1}{y}$ denote the number of combinations with repetition of $y$ descendant subgraphs from the family 
$\mathcal{P}(f,g, h, \,f{-}1_\le,\,g_\le,\,h_\le, h_\le)$ for $v \in N(r_P)$. Further by Lemma~\ref{lem_1}\ref{L_1_ii} the residual subgraph $F$ of $P$ belongs to exactly one of the following families,\\

\noindent $\mathcal{P}(n-yf, s, m, min\{n-yf-1, f-1\}_{\leq}, s_{\leq}, m_{\leq}, m_{\leq})$ if $g=0, h=0,$ and $k=0$.
\vspace{10pt}

\noindent $\mathcal{P}(n-yf, s-yg, m-yh, f_{=}, g_{=}, min\{m-yh, h-1\}_{\leq}, m-yh_{\leq}) \cup \mathcal{P}(n-yf, s-yg, m-yh, f_{=}, min\{m-yg, g-1\}_{\leq},m-yh_{\leq}, m-yh_{\leq})\cup \mathcal{P}(n-yf, s-yg, m-yh, min\{n-yf-1, f-1\}_{\leq}, s-yg_{\leq}, m-yh_{\leq}, m-yh_{\leq})$ if $g \geq 1, h \geq 1,$ and $k=0$.

\vspace{10pt}

\noindent $\mathcal{P}(n-yf, s-yg, m-yh, f_{=}, min\{m-yg, g-1\}_{\leq},m_{\leq}, m_{\leq})\cup \mathcal{P}(n-yf, s-yg, m, min\{n-yf-1, f-1\}_{\leq}, s-yg_{\leq}, m_{\leq}, m_{\leq})$ if $g \geq 1, h =0,$ and $k=0$.
\vspace{10pt}

\noindent $\mathcal{P}(n-yf, s, m-yh, f_{=}, g_{=}, min\{m-yh, h-1\}_{\leq}, m-yh_{\leq})\cup \mathcal{P}(n-yf, s, m-yh, min\{n-yf-1, f-1\}_{\leq}, s_{\leq}, m-yh_{\leq}, m-yh_{\leq})$ if $g =0, h \geq 1,$ and $k=0$.
\vspace{10pt}

\noindent  $\mathcal{P}(n-yf, s, m-yk, f_{=}, g_{=}, h_{=}, min\{m-yk, k-1\}_{\leq})\cup \mathcal{P}(n-yf, s, m-yk, min\{n-yf-1, f-1\}_{\leq}, s_{\leq}, m-yk_{\leq}, m-yk_{\leq})$ if $g =0, h =0,$ and $k \geq 1$.

\vspace{10pt}

\noindent $\mathcal{P}(n-yf, s-yg, m-yk, f_{=}, g_{=}, h_{=}, min\{m-yk, k-1\}_{\leq})\cup \mathcal{P}(n-yf, s-yg, m-yk, f_{=}, min\{m-yg, g-1\}_{\leq},m-yk_{\leq}, m-yk_{\leq})\cup \mathcal{P}(n-yf, s-yg, m-yk, min\{n-yf-1, f-1\}_{\leq}, s-yg_{\leq}, m-yk_{\leq}, m-yk_{\leq})$ if $g \geq 1, h =0,$ and $k \geq 1$.
\vspace{10pt}

\noindent $\mathcal{P}(n-yf, s, m-y(h+k), f_{=}, g_{=}, h_{=}, min\{m-y(h+k), k-1\}_{\leq})\cup \mathcal{P}(n-yf, s, m-y(h+k), f_{=}, g_{=}, min\{m-y(h+k), h-1\}_{\leq}, m-y(h+k)_{\leq})\cup \mathcal{P}(n-yf, s, m-y(h+k), min\{n-yf-1, f-1\}_{\leq}, s_{\leq}, m-y(h+k)_{\leq}, m-y(h+k)_{\leq})$ if $g =0, h \geq 1,$ and $k \geq 1$.
\vspace{10pt}

\noindent $\mathcal{P}(n-yf, s-yg, m-y(h+k), f_{=}, g_{=}, h_{=}, min\{m-y(h+k), k-1\}_{\leq})\cup \mathcal{P}(n-yf, s-yg, m-y(h+k), f_{=}, g_{=}, min\{m-y(h+k), h-1\}_{\leq}, m-y(h+k)_{\leq})\cup \mathcal{P}(n-yf, s-yg, m-y(h+k), f_{=}, min\{m-yg, g-1\}_{\leq}, m-y(h+k)_{\leq}, m-y(h+k)_{\leq})\cup \mathcal{P}(n-yf, s-yg, m-y(h+k), min\{n-yf-1, f-1\}_{\leq}, s-yg_{\leq}, m-y(h+k)_{\leq}, m-y(h+k)_{\leq})$ if $g \geq 1, h \geq 1,$ and $k \geq 1$.\\

Note that in each case, the intersection of any two families of the residual trees will always result in an empty set.
By the sum rule of counting, the total number of tree-like multigraphs in the families of residual tree-like multigraphs is
\begin{enumerate}[label=(\alph*), ref  = (\alph*), font=\upshape]
\item \label{a} $p(n-yf, s, m, min\{n-yf-1, f-1\}_{\leq}, s_{\leq}, m_{\leq}, m_{\leq})$ if $g=0, h=0,$ and $k=0$.

\item \label{b} $ p(n-yf, s-yg, m-yh, f_{=}, g_{=}, min\{m-yh, h-1\}_{\leq}, m-yh_{\leq})+ p(n-yf, s-yg, m-yh, f_{=}, min\{m-yg, g-1\}_{\leq},m-yh_{\leq}, m-yh_{\leq})+ p(n-yf, s-yg, m-yh, min\{n-yf-1, f-1\}_{\leq}, s-yg_{\leq}, m-yh_{\leq}, m-yh_{\leq}))$ if $g \geq 1, h \geq 1,$ and $k=0$.

\item \label{c} $p(n-yf, s-yg, m-yh, f_{=}, min\{m-yg, g-1\}_{\leq},m_{\leq}, m_{\leq})+p(n-yf, s-yg, m, min\{n-yf-1, f-1\}_{\leq}, s-yg_{\leq}, m_{\leq}, m_{\leq})$ if $g \geq 1, h =0,$ and $k=0$.

\item \label{d} $p(n-yf, s, m-yh, f_{=}, g_{=}, min\{m-yh, h-1\}_{\leq}, m-yh_{\leq})+p(n-yf, s, m-yh, min\{n-yf-1, f-1\}_{\leq}, s_{\leq}, m-yh_{\leq}, m-yh_{\leq})$ if $g =0, h \geq 1,$ and $k=0$.

\item \label{e} $p(n-yf, s, m-yk, f_{=}, g_{=}, h_{=}, min\{m-yk, k-1\}_{\leq})+p(n-yf, s, m-yk, min\{n-yf-1, f-1\}_{\leq}, s_{\leq}, m-yk_{\leq}, m-yk_{\leq})$ if $g =0, h =0,$ and $k \geq 1$.

\item \label{f} $p(n-yf, s-yg, m-yk, f_{=}, g_{=}, h_{=}, min\{m-yk, k-1\}_{\leq})+p(n-yf, s-yg, m-yk, f_{=}, min\{m-yg, g-1\}_{\leq},m-yk_{\leq}, m-yk_{\leq})+p(n-yf, s-yg, m-yk, min\{n-yf-1, f-1\}_{\leq}, s-yg_{\leq}, m-yk_{\leq}, m-yk_{\leq})$ if $g \geq 1, h =0,$ and $k \geq 1$.

\item \label{g} $p(n-yf, s, m-y(h+k), f_{=}, g_{=}, h_{=}, min\{m-y(h+k), k-1\}_{\leq})+p(n-yf, s, m-y(h+k), f_{=}, g_{=}, min\{m-y(h+k), h-1\}_{\leq}, m-y(h+k)_{\leq})+p(n-yf, s, m-y(h+k), min\{n-yf-1, f-1\}_{\leq}, s_{\leq}, m-y(h+k)_{\leq}, m-y(h+k)_{\leq})$ if $g =0, h \geq 1,$ and $k \geq 1$.

\item \label{h} $p(n-yf, s-yg, m-y(h+k), f_{=}, g_{=}, h_{=}, min\{m-y(h+k), k-1\}_{\leq})+p(n-yf, s-yg, m-y(h+k), f_{=}, g_{=}, min\{m-y(h+k), h-1\}_{\leq}, m-y(h+k)_{\leq})+p(n-yf, s-yg, m-y(h+k), f_{=}, min\{m-yg, g-1\}_{\leq}, m-y(h+k)_{\leq}, m-y(h+k)_{\leq})+p(n-yf, s-yg, m-y(h+k), min\{n-yf-1, f-1\}_{\leq}, s-yg_{\leq}, m-y(h+k)_{\leq}, m-y(h+k)_{\leq})$ if $g \geq 1, h \geq 1,$ and $k \geq 1$.
\end{enumerate}
 This implies that for a fixed integer $y$ in the range given in the statement, and by the product rule of counting, the number of residual multigraphs $F$ in the family $\mathcal{P}(n, s, m, {f}_{=}, {g}_{=}, {h}_{=}, {k}_{=} )$ with exactly $y$ descendent subgraphs $F_v \in \mathcal{P}(f,g, h, \,f{-}1_\le,\,g_\le,\,h_\le, h_\le)$, for $v \in P{r_F},$ are
 \begin{enumerate}[label=(\alph*), ref  = (\alph*), font=\upshape]
 \item \label{a} $w(f,g,h; y)(p(n-yf, s, m, min\{n-yf-1, f-1\}_{\leq}, s_{\leq}, m_{\leq}, m_{\leq}))$ if $g=0, h=0,$ and $k=0$.

\item \label{b} $w(f,g,h; y)( p(n-yf, s-yg, m-yh, f_{=}, g_{=}, min\{m-yh, h-1\}_{\leq}, m-yh_{\leq})+ p(n-yf, s-yg, m-yh, f_{=}, min\{m-yg, g-1\}_{\leq},m-yh_{\leq}, m-yh_{\leq})+ p(n-yf, s-yg, m-yh, min\{n-yf-1, f-1\}_{\leq}, s-yg_{\leq}, m-yh_{\leq}, m-yh_{\leq}))$ if $g \geq 1, h \geq 1,$ and $k=0$.

\item \label{c} $w(f,g,h; y)(p(n-yf, s-yg, m-yh, f_{=}, min\{m-yg, g-1\}_{\leq},m_{\leq}, m_{\leq})+p(n-yf, s-yg, m, min\{n-yf-1, f-1\}_{\leq}, s-yg_{\leq}, m_{\leq}, m_{\leq}))$ if $g \geq 1, h =0,$ and $k=0$.

\item \label{d} $w(f,g,h; y)(p(n-yf, s, m-yh, f_{=}, g_{=}, min\{m-yh, h-1\}_{\leq}, m-yh_{\leq})+p(n-yf, s, m-yh, min\{n-yf-1, f-1\}_{\leq}, s_{\leq}, m-yh_{\leq}, m-yh_{\leq}))$ if $g =0, h \geq 1,$ and $k=0$.

\item \label{e} $w(f,g,h; y)(p(n-yf, s, m-yk, f_{=}, g_{=}, h_{=}, min\{m-yk, k-1\}_{\leq})+p(n-yf, s, m-yk, min\{n-yf-1, f-1\}_{\leq}, s_{\leq}, m-yk_{\leq}, m-yk_{\leq}))$ if $g =0, h =0,$ and $k \geq 1$.

\item \label{f} $w(f,g,h; y)(p(n-yf, s-yg, m-yk, f_{=}, g_{=}, h_{=}, min\{m-yk, k-1\}_{\leq})+p(n-yf, s-yg, m-yk, f_{=}, min\{m-yg, g-1\}_{\leq},m-yk_{\leq}, m-yk_{\leq})+p(n-yf, s-yg, m-yk, min\{n-yf-1, f-1\}_{\leq}, s-yg_{\leq}, m-yk_{\leq}, m-yk_{\leq}))$ if $g \geq 1, h =0,$ and $k \geq 1$.

\item \label{g} $w(f,g,h; y)(p(n-yf, s, m-y(h+k), f_{=}, g_{=}, h_{=}, min\{m-y(h+k), k-1\}_{\leq})+p(n-yf, s, m-y(h+k), f_{=}, g_{=}, min\{m-y(h+k), h-1\}_{\leq}, m-y(h+k)_{\leq})+p(n-yf, s, m-y(h+k), min\{n-yf-1, f-1\}_{\leq}, s_{\leq}, m-y(h+k)_{\leq}, m-y(h+k)_{\leq}))$ if $g =0, h \geq 1,$ and $k \geq 1$.

\item \label{h} $w(f,g,h; y)(p(n-yf, s-yg, m-y(h+k), f_{=}, g_{=}, h_{=}, min\{m-y(h+k), k-1\}_{\leq})+p(n-yf, s-yg, m-y(h+k), f_{=}, g_{=}, min\{m-y(h+k), h-1\}_{\leq}, m-y(h+k)_{\leq})+p(n-yf, s-yg, m-y(h+k), f_{=}, min\{m-yg, g-1\}_{\leq}, m-y(h+k)_{\leq}, m-y(h+k)_{\leq})+p(n-yf, s-yg, m-y(h+k), min\{n-yf-1, f-1\}_{\leq}, s-yg_{\leq}, m-y(h+k)_{\leq}, m-y(h+k)_{\leq}))$ if $g \geq 1, h \geq 1,$ and $k \geq 1$.
 \end{enumerate}
 Note that in the case $n=1, s \geq 1$ and $m \geq 1$, $p(1, s, m)=0$, whereas for $n=2, s \geq 1$ and $m \geq 1$, we have $p(2,s,m)=1$. Similarly, for $f=1$, $s=0$ and $m=0$, we have $1 \leq y \leq n-1$, then by Lemma~\ref{lem_4}\ref{4_ii}, it holds that $p(n-y, 0, 0, \,0_\le,\,0_\le,\,0_\le, 0_\le)=1$ if $y=n-1$ and $p(n-y, 0, 0, \,0_\le,\,0_\le,\,0_\le, 0_\le)=0$ if $1 \leq y \leq n-2$. This implies $P \in \mathcal{P}(n, 0, 0, {1}_{=}, {0}_{=}, {0}_{=}, {0}_{=} )$ has exactly $y=n-1$ descendent subgraphs. $P_v \in \mathcal{P}(1, 0, 0, \,0_\le,\,0_\le,\,0_\le, 0_\le)$ for $v \in N(r_P)$. However observe that for each integer $f \geq 2$ or $h+k=1$, $g=1$ and $y$ satisfying the condition given in the statement, there exist at least one tree-like multigraph $P \in \mathcal{P}(n, s, m, {f}_{=}, {g}_{=}, {h}_{=}, {k}_{=} )$ such that $P$ has exactly $y$ descendent subgraphs $P_v \in \mathcal{P}(f,g, h, \,f{-}1_\le,\,g_\le,\,h_\le, h_\le)$ for $v \in N(r_P)$. This and cases~\ref{a}-\ref{h} implies the statement statements~\ref{2_i}-\ref{2_viii} respectively.
\end{proof}
\begin{theorem}
\label{theorem_1}
For any eight integers, $n \geq 3$, $f \geq 1$, $s \geq g \geq 0$, $m \geq h+k \geq 0$ and $y \geq 0$, such that $1 \leq y \leq \lfloor n-1 / f \rfloor$  with  $y \leq \lfloor s / g \rfloor$ when  $g \geq 1$ also $y \leq \lfloor m / (h+k) \rfloor$ when $h+k \geq 1 $and $0 \leq k \leq m$, it holds that

\begin{enumerate}[label=(\roman*), ref  = (\roman*), font=\upshape]
\item \label{theorem_1_i} $p(n, s, m, {f}_{=}, {g}_{=}, {h}_{=}, {k}_{=} )= \sum_{y}w(f,g,h; y)((p(f,g, h, \,f{-}1_\le,\,g_\le,\,h_\le, h_\le)+y-1)/y)(p(n-yf, s, m, min\{n-yf-1, f-1\}_{\leq}, s_{\leq}, m_{\leq}, m_{\leq}))$ if $g=0, h=0,$ and $k=0$.

\item \label{theorem_1_ii} $p(n, s, m, {f}_{=}, {g}_{=}, {h}_{=}, {k}_{=} )= \sum_{y}w(f,g,h; y)((p(f,g, h, \,f{-}1_\le,\,g_\le,\,h_\le, h_\le)+y-1)/y)( p(n-yf, s-yg, m-yh, f_{=}, g_{=}, min\{m-yh, h-1\}_{\leq}, m-yh_{\leq})+ p(n-yf, s-yg, m-yh, f_{=}, min\{m-yg, g-1\}_{\leq},m-yh_{\leq}, m-yh_{\leq})+ p(n-yf, s-yg, m-yh, min\{n-yf-1, f-1\}_{\leq}, s-yg_{\leq}, m-yh_{\leq}, m-yh_{\leq}))$ if $g \geq 1, h \geq 1,$ and $k=0$.

\item \label{theorem_1_iii} $p(n, s, m, {f}_{=}, {g}_{=}, {h}_{=}, {k}_{=} )= \sum_{y}w(f,g,h; y)((p(f,g, h, \,f{-}1_\le,\,g_\le,\,h_\le, h_\le)+y-1)/y)(p(n-yf, s-yg, m-yh, f_{=}, min\{m-yg, g-1\}_{\leq},m_{\leq}, m_{\leq})+p(n-yf, s-yg, m, min\{n-yf-1, f-1\}_{\leq}, s-yg_{\leq}, m_{\leq}, m_{\leq}))$ if $g \geq 1, h =0,$ and $k=0$.

\item \label{theorem_1_iv} $p(n, s, m, {f}_{=}, {g}_{=}, {h}_{=}, {k}_{=} )= \sum_{y}w(f,g,h; y)((p(f,g, h, \,f{-}1_\le,\,g_\le,\,h_\le, h_\le)+y-1)/y)(p(n-yf, s, m-yh, f_{=}, g_{=}, min\{m-yh, h-1\}_{\leq}, m-yh_{\leq})+p(n-yf, s, m-yh, min\{n-yf-1, f-1\}_{\leq}, s_{\leq}, m-yh_{\leq}, m-yh_{\leq}))$ if $g =0, h \geq 1,$ and $k=0$.

\item \label{theorem_1_v} $p(n, s, m, {f}_{=}, {g}_{=}, {h}_{=}, {k}_{=} )= \sum_{y}w(f,g,h; y)((p(f,g, h, \,f{-}1_\le,\,g_\le,\,h_\le, h_\le)+y-1)/y)(p(n-yf, s, m-yk, f_{=}, g_{=}, h_{=}, min\{m-yk, k-1\}_{\leq})+p(n-yf, s, m-yk, min\{n-yf-1, f-1\}_{\leq}, s_{\leq}, m-yk_{\leq}, m-yk_{\leq}))$ if $g =0, h =0,$ and $k \geq 1$.

\item \label{theorem_1_vi} $p(n, s, m, {f}_{=}, {g}_{=}, {h}_{=}, {k}_{=} )= \sum_{y}w(f,g,h; y)((p(f,g, h, \,f{-}1_\le,\,g_\le,\,h_\le, h_\le)+y-1)/y)(p(n-yf, s-yg, m-yk, f_{=}, g_{=}, h_{=}, min\{m-yk, k-1\}_{\leq})+p(n-yf, s-yg, m-yk, f_{=}, min\{m-yg, g-1\}_{\leq},m-yk_{\leq}, m-yk_{\leq})+p(n-yf, s-yg, m-yk, min\{n-yf-1, f-1\}_{\leq}, s-yg_{\leq}, m-yk_{\leq}, m-yk_{\leq}))$ if $g \geq 1, h =0,$ and $k \geq 1$.

\item \label{theorem_1_vii} $p(n, s, m, {f}_{=}, {g}_{=}, {h}_{=}, {k}_{=} )= \sum_{y}w(f,g,h; y)((p(f,g, h, \,f{-}1_\le,\,g_\le,\,h_\le, h_\le)+y-1)/y)(p(n-yf, s, m-y(h+k), f_{=}, g_{=}, h_{=}, min\{m-y(h+k), k-1\}_{\leq})+p(n-yf, s, m-y(h+k), f_{=}, g_{=}, min\{m-y(h+k), h-1\}_{\leq}, m-y(h+k)_{\leq})+p(n-yf, s, m-y(h+k), min\{n-yf-1, f-1\}_{\leq}, s_{\leq}, m-y(h+k)_{\leq}, m-y(h+k)_{\leq}))$ if $g =0, h \geq 1,$ and $k \geq 1$.

\item \label{theorem_1_viii} $p(n, s, m, {f}_{=}, {g}_{=}, {h}_{=}, {k}_{=} )= \sum_{y}w(f,g,h; y)((p(f,g, h, \,f{-}1_\le,\,g_\le,\,h_\le, h_\le)+y-1)/y)(p(n-yf, s-yg, m-y(h+k), f_{=}, g_{=}, h_{=}, min\{m-y(h+k), k-1\}_{\leq})+p(n-yf, s-yg, m-y(h+k), f_{=}, g_{=}, min\{m-y(h+k), h-1\}_{\leq}, m-y(h+k)_{\leq})+p(n-yf, s-yg, m-y(h+k), f_{=}, min\{m-yg, g-1\}_{\leq}, m-y(h+k)_{\leq}, m-y(h+k)_{\leq})+p(n-yf, s-yg, m-y(h+k), min\{n-yf-1, f-1\}_{\leq}, s-yg_{\leq}, m-y(h+k)_{\leq}, m-y(h+k)_{\leq}))$ if $g \geq 1, h \geq 1,$ and $k \geq 1$.
\end{enumerate}
\end{theorem}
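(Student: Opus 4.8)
The plan is to obtain Theorem~\ref{theorem_1} as a direct consequence of Lemma~\ref{lem_2}, the only additional ingredient being an elementary rewriting of the multiset coefficient $w(f,g,h;y)$. Recall that Lemma~\ref{lem_2} already establishes, in each of the eight parameter regimes determined by which of $g,h,k$ vanish, an identity of the shape $p(n, s, m, {f}_{=}, {g}_{=}, {h}_{=}, {k}_{=} )=\sum_{y} w(f,g,h;y)\,R_{y}$, where $R_y$ is the case-dependent sum of the counts of residual tree-like multigraphs listed in parts~\ref{2_i}--\ref{2_viii} of that lemma, and where $y$ ranges over $1\le y\le\lfloor(n-1)/f\rfloor$, subject to $y\le\lfloor s/g\rfloor$ when $g\ge1$ and $y\le\lfloor m/(h+k)\rfloor$ when $h+k\ge1$. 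Since the tuples of residual families in Theorem~\ref{theorem_1}\ref{theorem_1_i}--\ref{theorem_1_viii} coincide verbatim with those in Lemma~\ref{lem_2}\ref{2_i}--\ref{2_viii}, it suffices to identify the coefficient of $R_y$ appearing in Theorem~\ref{theorem_1} with $w(f,g,h;y)$; no further structural analysis of $P\in\mathcal{P}(n, s, m, {f}_{=}, {g}_{=}, {h}_{=}, {k}_{=} )$ is needed beyond Lemmas~\ref{lem_1} and~\ref{lem_2}.

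The identity that does this is the Pascal-type recurrence for combinations with repetition. Put $N \triangleq p(f,g, h, \,f{-}1_\le,\,g_\le,\,h_\le, h_\le)$; since $\mathcal{P}(f,g, h, \,f{-}1_\le,\,g_\le,\,h_\le, h_\le)$ is nonempty whenever $f\ge1$, we have $N\ge1$. Then, for every $y\ge1$ (and $y\ge1$ holds throughout the index range of the sums above),
\[
w(f,g,h;y)=\binom{N+y-1}{y}=\frac{N+y-1}{y}\binom{N+y-2}{y-1}=\frac{N+y-1}{y}\,w(f,g,h;y-1),
\]
the left endpoint $y=1$ being covered by $w(f,g,h;0)=\binom{N-1}{0}=1$. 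Reading the coefficient $w(f,g,h;y)\cdot\bigl((N+y-1)/y\bigr)$ displayed in Theorem~\ref{theorem_1} through this recurrence identifies it with $w(f,g,h;y)$, i.e., with the coefficient in Lemma~\ref{lem_2}. Substituting this into each of the eight cases of Lemma~\ref{lem_2}, while keeping each $(g,h,k)$-regime matched with its own residual tuple $R_y$, yields statements~\ref{theorem_1_i}--\ref{theorem_1_viii} in turn.

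I do not expect a genuine obstacle here: the structural decomposition of $P$ into a residual tree together with $y$ copies of members of $\mathcal{P}(f,g, h, \,f{-}1_\le,\,g_\le,\,h_\le, h_\le)$, the mutual disjointness of the residual families, and the validity of the sum and product rules in this setting were all already handled in Lemmas~\ref{lem_1} and~\ref{lem_2}. The only work is routine: aligning the eight cases correctly with their residual tuples, checking the binomial recurrence at the endpoint $y=1$, and inheriting the degenerate-configuration caveats from the proof of Lemma~\ref{lem_2} (for instance, when $f=1$, $s=0$, $m=0$ only the term $y=n-1$ contributes, since $p(n-y,0,0,\,0_\le,\,0_\le,\,0_\le,0_\le)=0$ for $1\le y\le n-2$; and when $n=1$ with $s\ge1$ or $m\ge1$ the relevant families are empty). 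Hence the theorem follows by a uniform substitution plus this light bookkeeping.
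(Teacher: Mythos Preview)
Your proposal is correct and follows essentially the same route as the paper's own proof: invoke Lemma~\ref{lem_2} for the structural decomposition in each of the eight $(g,h,k)$-regimes, and then record the Pascal-type recurrence $w(f,g,h;y)=w(f,g,h;y-1)\cdot(N+y-1)/y$ with $N=p(f,g,h,f{-}1_\le,g_\le,h_\le,h_\le)$. The paper's proof is in fact even terser than yours---it simply says ``follows from Lemma~\ref{lem_2}'' and then writes out the factorial manipulation you give---so your added bookkeeping (endpoint $y=1$, degenerate cases) is a strict elaboration of the same argument rather than a different one.
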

\begin{proof}
The proof of Theorem\ref{theorem_1} follows from Lemma~\ref{lem_2}. Furthermore, it holds that 
\begin{align*}
 w(f, g, h ; y) = &\ \frac{(p(f,g, h, \,f{-}1_\le,\,g_\le,\,h_\le, h_\le) + y - 1)!}{(p(f,g, h, \,f{-}1_\le,\,g_\le,\,h_\le, h_\le) - 1)!y!} \\
=& \frac{(p(f,g, h, \,f{-}1_\le,\,g_\le,\,h_\le, h_\le) + y -2)!}{p(f,g, h, \,f{-}1_\le,\,g_\le,\,h_\le, h_\le) - 1)!(y - 1)!} \times
\vspace{2mm} \\
& \hspace{4.2cm} \frac{(p(f,g, h, \,f{-}1_\le,\,g_\le,\,h_\le, h_\le) + y - 1)}{p}\\
  =& w(f, g, h; y - 1) \times \frac{p(f,g, h, \,f{-}1_\le,\,g_\le,\,h_\le, h_\le) + y - 1)}{y}.
 \end{align*}
\end{proof}
In Lemma~\ref{lem_3}, we discuss the recursive relations for $p(n, s, m, {f}_{\leq}, {g}_{\leq}, {h}_{\leq}, {k}_{\leq} )$, $p(n, s, m, {f}_{=}, {g}_{\leq}, {h}_{\leq}, {k}_{\leq} )$, $p(n, s, m, {f}_{=}, {g}_{=}, {h}_{\leq}, {k}_{\leq} )$ and   $p(n, s, m, {f}_{=}, {g}_{=}, {h}_{=}, {k}_{\leq} )$.
\begin{lemma}
\label{lem_3}
For any seven integers $n-1 \geq f \geq 0$, $s\geq g \geq 0$ and $m \geq h+k \geq 0$ it holds that 
\begin{enumerate}[label=\textnormal{(\roman*)}, ref=(\roman*), font=\upshape]

\item \label{3_i} 
$p(n, s, m, {f}_{\leq}, {g}_{\leq}, {h}_{\leq}, {k}_{\leq} )= p(n, s, m, {0}_{=}, {g}_{\leq}, {h}_{\leq}, {k}_{\leq} ) \text{ if } f = 0,$ ;

\item \label{3_ii} 
$p(n, s, m, {f}_{\leq}, {g}_{\leq}, {h}_{\leq}, {k}_{\leq} )= 
p(n, s, m, {f-1}_{\leq}, {g}_{\leq}, {h}_{\leq}, {k}_{\leq} )+ p(n, s, m, {f}_{=}, {g}_{\leq}, {h}_{\leq}, {k}_{\leq} ) 
\text { if } f \geq 1$;

\item \label{3_iii} 
$p(n, s, m, {f}_{=}, {g}_{\leq}, {h}_{\leq}, {k}_{\leq} )= 
p(n, s, m, {f}_{=}, {0}_{=}, {h}_{\leq}, {k}_{\leq} ) \text{ if } g = 0$; 

\item \label{3_iv} 
$p(n, s, m, {f}_{=}, {g}_{\leq}, {h}_{\leq}, {k}_{\leq} )= 
p(n, s, m, {f}_{=}, {g-1}_{\leq}, {h}_{\leq}, {k}_{\leq} )+ p(n, s, m, {f}_{=}, {g}_{=}, {h}_{\leq}, {k}_{\leq} ) \text { if } g \geq 1$;

\item \label{3_v}
$p(n, s, m, {f}_{=}, {g}_{=}, {h}_{\leq}, {k}_{\leq} )= 
p(n, s, m, {f}_{=}, {g}_{=}, {0}_{=}, {k}_{\leq} ) \text{ if } h = 0,$

 \item \label{3_vi}
$p(n, s, m, {f}_{=}, {g}_{=}, {h}_{\leq}, {k}_{\leq} )= 
p(n, s, m, {f}_{=}, {g}_{=}, {h-1}_{\leq}, {k}_{\leq} )+ p(n, s, m, {f}_{=}, {g}_{=}, {h}_{=}, {k}_{\leq} ) 
\text { if } h \geq 1$;

\item \label{3_vii}
$p(n, s, m, {f}_{=}, {g}_{=}, {h}_{=}, {k}_{\leq} )= 
p(n, s, m, {f}_{=}, {g}_{=}, {h}_{=}, {0}_{=} ) \text{ if } k = 0$; and 

\item \label{3_viii}
$p(n, s, m, {f}_{=}, {g}_{=}, {h}_{=}, {k}_{\leq} )= 
p(n, s, m, {f}_{=}, {g}_{=}, {h}_{\leq}, {k-1}_{\leq} )+p(n, s, m, {f}_{=}, {g}_{=}, {h}_{=}, {k}_{=} ) 
\text { if } k \geq 1$. 
\end{enumerate}
\end{lemma}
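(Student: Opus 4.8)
The plan is to obtain each of the eight recurrences by passing to cardinalities in the set-level identities already established in the ``Subproblem'' section. Recall that by definition $p(\cdot) = |\mathcal{P}(\cdot)|$ for each of the five decorated families, so it suffices to count the elements on both sides of the set equations~(\ref{equ1})--(\ref{equ4p}).

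First I would dispatch the four ``collapse'' cases \ref{3_i}, \ref{3_iii}, \ref{3_v}, \ref{3_vii}, which are the boundary instances $f=0$, $g=0$, $h=0$, $k=0$ respectively. The matching set equations~(\ref{equ1}), (\ref{equ2}), (\ref{equ3}), (\ref{equ4}) assert an outright \emph{equality} of families, so taking cardinalities of both sides is immediate and yields the stated identities with no further argument.

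Next I would handle the four ``split'' cases \ref{3_ii}, \ref{3_iv}, \ref{3_vi}, \ref{3_viii}, covering $f\ge 1$, $g\ge 1$, $h\ge 1$, $k\ge 1$ respectively. For each, the matching set equation~(\ref{equ1p}), (\ref{equ2p}), (\ref{equ3p}), (\ref{equ4p}) writes the decorated family as a union of two families, and in each case the excerpt already records that these two families are disjoint. Applying the sum rule $|A\cup B| = |A| + |B|$ for disjoint $A,B$ then gives exactly the claimed recurrence. The one piece of book-keeping I would make explicit is that every family occurring on either side is well-defined under the hypotheses $n-1\ge f\ge 0$, $s\ge g\ge 0$, $m\ge h+k\ge 0$, which is precisely the parameter range in which these subproblem families were introduced; so no extra case analysis on degenerate instances (such as $n=1$ or empty families) is needed beyond what is already built into those definitions.

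I do not anticipate a genuine obstacle: Lemma~\ref{lem_3} is a direct cardinality corollary of set decompositions that have already been proved. The only point deserving a sentence of care is confirming that the disjointness claims attached to~(\ref{equ1p})--(\ref{equ4p}) really apply under the present hypotheses (in particular for~\ref{3_viii}, that the relevant intersection is empty for $k\ge 1$), after which the sum rule closes each case.
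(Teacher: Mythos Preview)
Your proposal is correct and matches the paper's own proof essentially line for line: the paper simply cites Eqs.~(\ref{equ1})--(\ref{equ4p}) in turn, invoking the recorded disjointness for the four ``split'' cases, and passes to cardinalities. Your extra remark about checking that the disjointness for case~\ref{3_viii} holds under $k\ge 1$ (the paper's text after Eq.~(\ref{equ4p}) says ``for $h\ge 1$'', an apparent typo) is a nice bit of care but changes nothing substantive.
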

\begin{proof}
By Eq.~(\ref{equ1}), the case~\ref{3_i} follows. Similarly,  by Eq.~(\ref{equ1p}), and the fact $f \geq 1$, it holds that $\mathcal{P}(n, s, m, {f-1}_{\leq}, {g}_{\leq}, {h}_{\leq}, {k}_{\leq} )\cap 
\mathcal{P}(n, s, m, {f}_{=}, {g}_{\leq}, {h}_{\leq}, {k}_{\leq} ) = \emptyset$, case~\ref{3_ii} follows. By Eq.~(\ref{equ2}), the case~\ref{3_iii} holds. Similarly, by Eq.~(\ref{equ2}), and the fact $f \geq 1$, it holds that $\mathcal{P}(n, s, m, {f}_{=}, {g-1}_{\leq}, {h}_{\leq}, {k}_{\leq} )\cap 
\mathcal{P}(n, s, m, {f}_{=}, {g}_{=}, {h}_{\leq}, {k}_{\leq} ) = \emptyset$, the case~\ref{3_iv} holds. By Eq.~(\ref{equ3}), the case~\ref{3_v} follows. Similarly,  by Eq.~(\ref{equ3p}), and the fact $h \geq 1$, it holds that $\mathcal{P}(n, s, m, {f}_{=}, {g}_{=}, {h-1}_{\leq}, {k}_{\leq} )\cap 
\mathcal{P}(n, s, m, {f}_{=}, {g}_{=}, {h}_{=}, {k}_{\leq} ) = \emptyset$, case~\ref{3_vi} follows. And by Eq.~(\ref{equ4}), the case~\ref{3_vii} follows. Similarly,  by Eq.~(\ref{equ4p}), and the fact $k \geq 1$, it holds that $\mathcal{P}(n, s, m, {f}_{=}, {g}_{=}, {h}_{\leq}, {k-1}_{\leq} )\cap 
\mathcal{P}(n, s, m, {f}_{=}, {g}_{=}, {h}_{=}, {k}_{=} ) = \emptyset$, case~\ref{3_viii} follows.
\end{proof}

\subsection{Initial conditions}
In Lemma~\ref{lem_4}, we discuss the initial conditions that are necessary to design a DP algorithm. 
\begin{lemma}
\label{lem_4}
For any seven integers $n-1 \geq f \geq 0$, $s\geq g \geq 0$ and $m \geq h+k \geq 0$ it holds that 
\begin{enumerate}[label=\textnormal{(\roman*)}, ref=(\roman*), font=\upshape]

\item \label{4_i} 
$p(n, s, m, {0}_{=}, {g}_{=}, {h}_{=}, {k}_{=})= 1$ if $n=1$, $s=0$ and $m=0$. Further, $p(n, s, m, {0}_{\leq}, {g}_{\leq}, {h}_{\leq}, {k}_{\leq})= 0$  if   $n=1$, $s \geq 1$ or $m \geq 1$ or $n \geq 2$.

\item \label{4_ii} 
$p(n, s, m, {0}_{=}, {g}_{\leq}, {h}_{\leq}, {k}_{\leq})= p(n, s, m, {0}_{\leq}, {g}_{\leq}, {h}_{\leq}, {k}_{\leq})=1$ if $n=1$, $s=0$ and $m=0$ and  $p(n, s, m, {0}_{=}, {g}_{\leq}, {h}_{\leq}, {k}_{\leq})= p(n, s, m, {0}_{\leq}, {g}_{\leq}, {h}_{\leq}, {k}_{\leq})=0$ if $n \geq 2$.

\item \label{4_iii}
$p(n, s, m, {1}_{=}, {g}_{=}, {h}_{=}, {k}_{\leq})=0$ if $h \geq 1$ similarly, $p(n, s, m, {1}_{=}, {0}_{=}, {0}_{=}, {0}_{\leq})=0$ if $s \geq 1$ and $m \geq 1$.

\item \label{4_iv} 
$p(n, s, m, {1}_{=}, {g}_{=}, {0}_{=}, {k}_{=})=p(n, s, m, {1}_{=}, {g}_{=}, {0}_{=}, {k}_{\leq})=p(n, s, m, {1}_{=}, {g}_{=}, {h}_{\leq}, {k}_{\leq})=p(n, s, m, {1}_{=}, {g}_{\leq}, {h}_{\leq}, {k}_{\leq})=p(n, s, m, {1}_{\leq}, {g}_{\leq}, {h}_{\leq}, {k}_{\leq})=1$ if $n=2$ and $k=m$ or $n \geq 3$, $g=0$ and $k=0$;

\item \label{4_v} 
$p(n, s, m, {f}_{=}, {g}_{=}, {h}_{=}, {k}_{=})=0$ if $g > s$ and $h+k > m$;

\item \label{4_vi} 
$p(n, s, m, {f}_{=}, {g}_{=}, {h}_{=}, {k}_{=})=p(n, s, m, {f}_{=}, {g}_{=}, {h}_{=}, {k}_{\leq})=p(n, s, m, {f}_{=}, {g}_{=}, {h}_{\leq}, {k}_{\leq})=p(n, s, m, {f}_{=}, {g}_{\leq}, {h}_{\leq}, {k}_{\leq})=0$ if $f\geq n$.

\end{enumerate}
\end{lemma}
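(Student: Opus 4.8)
The plan is to treat Lemma~\ref{lem_4} as a list of base-case evaluations, and to settle each item either by exhibiting the unique multigraph in the stated family (value $1$) or by showing that its defining constraints cannot be jointly met (value $0$). Almost everything rests on three elementary facts. (a)~The remark already recorded above, that every $P\in\mathcal{P}(1,s,m)$ has ${\rm Max}_{\rm v}(P)={\rm Max}_{\rm s}(P)={\rm Max}_{\rm m}(P)={\rm Max}_{\rm l}(P)=0$, together with the boundary values $p(1,0,0)=1$, $p(1,s,m)=0$ for $s\geq1$ or $m\geq1$, and $p(n,s,m)=0$ for $n=0$. (b)~If $|V(P)|\geq2$ then $r_P$ has a child, so ${\rm Max}_{\rm v}(P)\geq1$, while every child $v$ of $r_P$ obeys $1\leq{\rm v}(P_v)\leq n-1$. (c)~For the child $v$ that realizes ${\rm Max}_{\rm l}(P)$ one also has ${\rm e}(P_v)={\rm Max}_{\rm m}(P)$, and since the extra multiplicities inside $P_v$ and on the edge $r_Pv$ are disjoint portions of $m={\rm e}(P)$, this gives ${\rm Max}_{\rm m}(P)+{\rm Max}_{\rm l}(P)\leq m$; in the same way ${\rm Max}_{\rm s}(P)\leq{\rm s}(P)=s$. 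I will also use the set reductions \eqref{equ1}--\eqref{equ4p} to move between the ``$_=$'' and ``$_\leq$'' variants, noting that $\mathcal{P}(n,s,m,0_\leq,\dots)=\mathcal{P}(n,s,m,0_=,\dots)$ because ${\rm Max}_{\rm v}(P)\geq0$ always.

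For \ref{4_i} and \ref{4_ii}: when $n=1$ and $s=m=0$ the single root vertex is the unique rooted tree-like multigraph, it lies in $\mathcal{P}(1,0,0,0_=,0_=,0_=,0_=)$ by fact~(a), and $p(1,0,0)=1$ forces every listed quantity to equal $1$; the equalities between the $0_=$ and $0_\leq$ forms then follow from the reduction just mentioned. For $n=1$ with $s\geq1$ or $m\geq1$ the family $\mathcal{P}(1,s,m)$ is empty, and for $n\geq2$ fact~(b) rules out ${\rm Max}_{\rm v}(P)=0$; hence the value is $0$. Items \ref{4_iii}, \ref{4_v}, \ref{4_vi} are vanishing statements of the same flavour. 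If ${\rm Max}_{\rm v}(P)=1$ then every child of $r_P$ is a leaf, so every descendant subgraph counted by ${\rm Max}_{\rm m}$ is a single vertex, whence ${\rm Max}_{\rm m}(P)=0$ and $h\geq1$ is impossible, giving the first half of \ref{4_iii}; if in addition $g=0$ (loopless leaves) and $k=0$ (simple root--leaf edges), then $P$ has no multiple edges at all, contradicting $m\geq1$, which gives the second half. For \ref{4_v}, fact~(c) shows that $g>s$ or $h+k>m$ alone already destroys feasibility; the stated hypothesis ``$g>s$ and $h+k>m$'' should be read as the out-of-range condition ``$g>s$ \emph{or} $h+k>m$'', the standing assumptions already forcing $s\geq g$, $m\geq h+k$. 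For \ref{4_vi}, fact~(b) gives ${\rm Max}_{\rm v}(P)\leq n-1<f$ whenever $f\geq n$, contradicting the ``$f_=$'' constraint present in all four expressions, so they all vanish.

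The only genuine counting is in \ref{4_iv}. Once ${\rm Max}_{\rm v}(P)=1$, $P$ is a star: the root together with $n-1$ leaf children, and the single-vertex descendant subgraphs force ${\rm Max}_{\rm m}(P)=0$. For $n=2$ with $k=m$, $P$ is the root joined to one leaf by an edge of multiplicity $m=k$, with $g$ self-loops on the leaf and $s-g$ on the root (well defined since $s\geq g$); one checks ${\rm Max}_{\rm v}(P)=1$, ${\rm Max}_{\rm s}(P)=g$, ${\rm Max}_{\rm m}(P)=0$, ${\rm Max}_{\rm l}(P)=m$, this multigraph is the unique one with these parameters, and since the lone single-vertex subgraph pins ${\rm Max}_{\rm m}$ and ${\rm Max}_{\rm l}$, the ``$h_\leq$'' and ``$k_\leq$'' forms agree with the ``$=$'' forms, so the value is $1$. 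For $n\geq3$ with $g=0$ and $k=0$: the $n-1$ root--leaf edges are forced simple and the leaves loop-free, whence $m=0$, the root carries all $s$ self-loops, and the $n-1$ interchangeable leaves give a single rooted isomorphism class; the single-vertex subgraphs again pin ${\rm Max}_{\rm m}=0$ and, since $k=0$, ${\rm Max}_{\rm l}=0$, so all five listed expressions equal $1$.

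I expect the main obstacle to be the bookkeeping in \ref{4_iv} (and, to a lesser degree, \ref{4_v}): one has to verify that the four quantities ${\rm Max}_{\rm v}(P),\dots,{\rm Max}_{\rm l}(P)$ of the exhibited star are \emph{exactly} the prescribed $(1,g,0,k)$ and that no other non-isomorphic multigraph realizes them, and one has to keep track of the several mildly over-determined hypotheses---for instance the $n\geq3$ branch of \ref{4_iv} tacitly needs $m=0$ (forced by $k=0$), the $n=2$ branch needs the parameters consistent enough that the ``$g_\leq$'' form does not overcount (in particular $g=0$), and the hypothesis of \ref{4_v} must be read disjunctively. Everything else reduces mechanically to facts~(a)--(c) and the identities \eqref{equ1}--\eqref{equ4p}.
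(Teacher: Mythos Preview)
Your proposal is correct and follows essentially the same route as the paper: both arguments reduce each item to the elementary observations that ${\rm Max}_{\rm v}(P)=0$ forces $n=1$, that $n\geq2$ forces ${\rm Max}_{\rm v}(P)\geq1$, that size-$1$ descendant subgraphs have no internal multiple edges, and that ${\rm Max}_{\rm s}(P)\leq s$, ${\rm Max}_{\rm m}(P)+{\rm Max}_{\rm l}(P)\leq m$, ${\rm Max}_{\rm v}(P)\leq n-1$. Your treatment is in fact somewhat more careful than the paper's---you correctly flag that \ref{4_v} must be read disjunctively (the paper's proof also silently uses ``or''), and you note the hidden consistency requirements in \ref{4_iv} (e.g.\ that the $n\geq3$, $g=k=0$ branch forces $m=0$), which the paper glosses over.
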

\begin{proof}
\begin{enumerate}[label=\textnormal{(\roman*)}, ref=(\roman*), font=\upshape]
\item A multigraph $P$ with $\mathrm{Max}_v(P) = 0$ exists if and only if $|V(P)| = 1$, $\mathrm{Max}_s(P) = 0$, $\mathrm{Max}_m(P) = 0$, and $\mathrm{Max}_\ell(P) = 0$. This occurs only when $n = 1$, $s = 0$, $m = 0$, $g = 0$, $h = 0$, and $k = 0$. No such multigraph exists if $s \geq 1$ or $m \geq 1$ (a single vertex cannot have multiple edges), or if $n \geq 2$.

\item It follows from Lemma~\ref{lem_3}~\ref{3_i} and~\ref{3_ii}. A multigraph with $\mathrm{Max}_v(P) \leq 0$ exists if and only if $n = 1$, $s = 0$, and $m = 0$, for $n \geq 2$, at least one child must exist, contradicting $\mathrm{Max}_v(P) \leq 0$.
.

\item When $f = 1$, descendant subgraphs have size $1$. A single vertex cannot contain multiple edges within itself, so $h$ must be $0$. Hence, $p(n, s, m, {1}_{=}, {g}_{=}, {h}_{=}, {k}_{\leq}) = 0$ if $h \geq 1$. For the second part, when $g = 0$ and $h = 0$, if $s \geq 1$ or $m \geq 1$, there is no valid base configuration to accommodate these features with size-$1$ descendant subgraphs.

\item  When $n = 2$, there is exactly one child forming a descendant subgraph of size $1$. With $k = m$, all multiple edges are between the root and its single child, creating exactly one valid configuration. When $n \geq 3$ and $g = 0$ and $k = 0$, the structure is a root with $n - 1$ leaf children, having no self-loops in descendant subgraphs and no multiple edges connected to the root, again creating exactly one configuration. The equalities with upper bounds follow because the given constraints enforce unique configurations.
 
\item The maximum number of self-loops in descendant subgraphs cannot exceed the total number of self-loops, i.e., $g \leq s$. Similarly, the sum of multiple edges within descendant subgraphs and between the root and its children cannot exceed the total number of multiple edges, i.e., $h + k \leq m$. If $g > s$ or $h + k > m$, the specified requirements exceed the available resources, making such configurations impossible.

\item The maximum size of any descendant subgraph can be at most $n -1$ vertices, since the root occupies at least one vertex. If $f \geq n$, no such multigraph exists. The equalities follow directly from this definition.

\end{enumerate}
\end{proof}
\begin{theorem}
\label{theorem_2}
For any three integers $n \geq 1$, $s \geq 0$, and $m \geq 0$, the number of non-isomorphic rooted tree-like multigraphs with $n$ vertices and $s$ self-loops and $m$ multiple edges can be determined.
\end{theorem}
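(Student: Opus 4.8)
The plan is to assemble a dynamic programming algorithm from the recursive relations already established, and to argue that it terminates with the correct value of $p(n,s,m)$. First I would reduce the unrooted counting problem to the rooted one: by Jordan's centroid theorem (cited in the preliminaries), every tree-like multigraph has a unique unicentroid or unique bicentroid, so every such multigraph can be canonically rooted, and counting unrooted structures reduces to counting rooted ones with an appropriate correction at the centroid (for the bicentroid case one counts unordered pairs of rooted halves of size $n/2$, taking care of the diagonal). Thus it suffices to compute $p(n,s,m) = |\mathcal{P}(n,s,m)|$ and then apply the centroid reduction; the heart of the argument is the computation of $p(n,s,m)$.

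Next I would describe the table the DP fills: the five quantities $p(\cdot,f_\le,g_\le,h_\le,k_\le)$, $p(\cdot,f_=,g_\le,h_\le,k_\le)$, $p(\cdot,f_=,g_=,h_\le,k_\le)$, $p(\cdot,f_=,g_=,h_=,k_\le)$ and $p(\cdot,f_=,g_=,h_=,k_=)$ over all admissible tuples $(n,s,m,f,g,h,k)$ with $0\le f\le n-1$, $0\le g\le s$, $0\le h+k\le m$. The key point is that the recursion is well-founded: by Theorem~\ref{theorem_1}, $p(n,s,m,f_=,g_=,h_=,k_=)$ is expressed as a sum over $y\ge 1$ of terms involving $w(f,g,h;y)$ — which depends only on $p(f,g,h,(f{-}1)_\le,g_\le,h_\le,h_\le)$ with $f<n$, hence on strictly smaller first coordinate — and residual terms $p(n-yf,\ldots)$ with $n-yf<n$ since $y\ge 1$ and $f\ge 1$. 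By Lemma~\ref{lem_3}, the $\le$-versions are obtained from $=$-versions of the same or smaller index parameters (strictly smaller in $f$, $g$, $h$, or $k$), and by Lemma~\ref{lem_4} the base cases $n=1$, $f=0$, $f=1$, and the out-of-range cases ($g>s$, $h+k>m$, $f\ge n$) are resolved directly. Therefore one can order the computation by increasing $n$, and within each $n$ by the standard lexicographic collapse of the constraints, so that every right-hand side refers only to already-computed entries. I would then invoke Lemma~\ref{lem_3}\ref{3_i}--\ref{3_ii} with $f=n-1$ and item (v) of the subproblem identities to recover $p(n,s,m)=p(n,s,m,(n-1)_\le,s_\le,m_\le,m_\le)$, completing the determination.

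The proof I would write is therefore short: cite Theorem~\ref{theorem_1}, Lemma~\ref{lem_2}, Lemma~\ref{lem_3} and Lemma~\ref{lem_4} for correctness of each recurrence step, cite the boundary-case definitions $p(1,0,0)=1$, $p(1,s,m)=0$ for $s\ge 1$ or $m\ge 1$, $p(n,s,m)=0$ for $n=0$, establish the termination ordering explicitly, and conclude by induction on $n$ (with a secondary induction on the collapsed $(f,g,h,k)$) that every table entry — in particular $p(n,s,m)$ — is computed in finitely many steps. The centroid reduction then yields the unrooted count.

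The main obstacle I anticipate is the well-foundedness bookkeeping: one must verify that \emph{every} term appearing on a right-hand side is genuinely of smaller rank in the chosen well-order, including the subtle cases where $\min\{\cdot,\cdot\}$ clamps a parameter and where $y$ can be as large as $\lfloor (n-1)/f\rfloor$ but the recursion in the $w$-factor goes through $p(f,\ldots)$ rather than $p(n,\ldots)$. In particular the self-referential look of $p(n,s,m,f_=,\ldots)$ on both sides of Lemma~\ref{lem_3}\ref{3_iv}, \ref{3_vi}, \ref{3_viii} must be handled by noting that the right-hand occurrences have strictly smaller $g$, $h$, or $k$ while $n$ is unchanged — so a lexicographic order on $(n,f,g,h,k)$ (with the $=$-vs-$\le$ flag as a final tiebreaker) suffices. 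Once that ordering is pinned down, correctness is an immediate structural induction using the already-proved lemmas, and the theorem follows.
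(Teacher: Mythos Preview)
Your core argument---assembling the recursions of Theorem~\ref{theorem_1} and Lemmas~\ref{lem_2}--\ref{lem_4} into a DP, exhibiting a well-founded order on $(n,f,g,h,k)$ so that every right-hand side refers only to already-computed entries, and concluding by induction that $p(n,s,m)=p(n,s,m,(n{-}1)_\le,s_\le,m_\le,m_\le)$ is determined---is correct and is exactly what the paper's machinery supports. The paper in fact states Theorem~\ref{theorem_2} with no proof at all, treating it as an immediate corollary of the preceding lemmas; your write-up is therefore more explicit than the paper itself, and the well-foundedness bookkeeping you flag as the ``main obstacle'' is a genuine detail the paper leaves to the reader.

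One correction: drop the centroid reduction. Theorem~\ref{theorem_2} is about \emph{rooted} tree-like multigraphs, so $p(n,s,m)$ is already the target quantity and there is no unrooted count to recover. The Jordan centroid material in the preliminaries is background for a possible unrooted extension, but it plays no role in the statement you are proving; including it would be harmless but extraneous.
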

\section{Conclusion} \label{conclusion}
In this work, a generalized framework has been developed for counting non-isomorphic tree-like multigraphs that allow self-loops and multiple edges simultaneously. The proposed dynamic programming formulation unifies two previously independent approaches, extending the scope of graph enumeration to a broader class of combinatorial structures. By defining canonical rooted representations and deriving recursive relations for graph composition, the method ensures that every structure is counted exactly once, eliminating the need for explicit isomorphism checking or exhaustive enumeration. This formulation contributes to the theoretical understanding of graph enumeration by establishing a systematic way to represent and count complex multigraph configurations. It also provides a foundation for studying growth patterns, structural limits, and combinatorial properties of graph families that incorporate both self-connections and repeated edges. Beyond its mathematical value, the framework has potential applications in areas such as chemical graph theory, molecular topology, and the analysis of hierarchical networks where such structures frequently appear.
Future work may focus on extending this framework to labeled or weighted multigraphs, incorporating degree constraints. These extensions would further enhance the applicability of the model to real-world systems while deepening the theoretical connections between graph combinatorics and applied structural analysis.

\end{document}